\pgfplotsset{compat=newest}
\pgfplotsset{plot coordinates/math parser=false}
\newlength\figureheight
\newlength\figurewidth
\xapptocmd{\appendix}{%
  \counterwithin{theorem}{subsection}
}{\typeout{Success}}{}
\newtheorem{assumption}{Assumption}
\theoremstyle{plain}
\newtheorem{theorem}{Theorem}[section]
\theoremstyle{plain}
\theoremstyle{plain}
\theoremstyle{definition}
\newtheorem{definition}[theorem]{Definition}
\theoremstyle{plain}
\newtheorem{lemma}[theorem]{Lemma}
\theoremstyle{plain}
\theoremstyle{plain}
\newtheorem{remark}[theorem]{Remark}
\newcommand{\RR}{\mathbb{R}}
\newcommand{\mSetPosSymMat}[1]{S^{#1}_{++}}
\newcommand{\mNormGen}[1]{\left\lVert {#1} \right\rVert}
\newcommand{\mNormGenSmall}[1]{\lVert {#1} \rVert}
\newcommand{\mDefFunction}[3]{#1: #2 \rightarrow #3}
\newcommand{\mDef}{\coloneqq}
\newcommand{\mOnes}[1]{\mathbbm 1_{#1}}
\newcommand{\mH}{h_{\mathrm{PB}}}
\newcommand{\mUpred}{u}
\newcommand{\mUpredOpt}{u^*}
\newcommand{\mUpredCand}{\bar u}
\newcommand{\mXpred}{x}
\newcommand{\mXpredOpt}{x^*}
\newcommand{\mXpredCand}{\bar x}
\newcommand{\XX}{\mathbb{X}}
\newcommand{\UU}{\mathbb{U}}
\renewcommand{\SS}{\mathcal{S}}
\newcommand{\DD}{\mathcal{D}}
\newcommand{\NN}{\mathbb{N}}
\newcommand{\KK}{\mathcal{K}}
\newcommand{\BB}{\mathcal{B}}
\renewcommand{\AA}{\mathcal A}
\newcommand{\ZZ}{\mathbb{Z}}
\renewcommand{\BB}{\mathcal{B}}
\DeclareMathOperator*{\argmin}{argmin}
\newcommand*{\END}{\hfill\ensuremath{\lhd}}
\tikzstyle{block} = [draw, rectangle,
\tikzstyle{sum} = [draw, fill=blue!20, circle, node distance=1cm]
\tikzstyle{input} = [coordinate]
\tikzstyle{output} = [coordinate]
\tikzstyle{pinstyle} = [pin edge={to-,thin,black}]
\newif\ifmygrid@coordinates
\tikzset{/mygrid/step line/.style={line width=0.80pt,draw=gray!80},
	/mygrid/steplet line/.style={line width=0.25pt,draw=gray!80}}
\def\mygrid@def@coordinates(#1,#2)(#3,#4){%
	\def\mygrid@xlo{#1}%
	\def\mygrid@xhi{#3}%
	\def\mygrid@ylo{#2}%
	\def\mygrid@yhi{#4}%
}
\newcommand\DrawGrid[3][]{%
	\pgfkeys{/mygrid/.cd,coordinates=true,step=1,steplet=0.2,#1}%
	\draw[/mygrid/steplet line] #2 grid[step=\mygrid@steplet] #3;
	\draw[/mygrid/step line] #2 grid[step=\mygrid@step] #3;
	\mygrid@def@coordinates#2#3%
	\ifmygrid@coordinates%
		\draw[/mygrid/step line]
		\foreach \xpos in {\mygrid@xlo,...,\mygrid@xhi} {%
				(\xpos,\mygrid@ylo) -- ++(0,-3pt)
				node[anchor=north] {$\xpos$}
			}
		\foreach \ypos in {\mygrid@ylo,...,\mygrid@yhi} {%
				(\mygrid@xlo,\ypos) -- ++(-3pt,0)
				node[anchor=east] {$\ypos$}
			};
	\fi%
}
\begin{document}

\title{Predictive control barrier functions: Enhanced safety mechanisms for learning-based control}

\author{Kim P.~Wabersich and Melanie~N.~Zeilinger 
\thanks{The authors are with the Institute for Dynamic Systems and Control, ETH Z\"urich, Z\"urich
CH-8092, Switzerland (e-mail: \{wkim$\vert$mzeilinger\}@ethz.ch)}
\thanks{This work was supported by the Swiss National Science Foundation under grant no. PP00P2 157601/1.}}

\maketitle

\begin{abstract}
	While learning-based control techniques often outperform classical controller designs, safety requirements limit the acceptance of such methods in many applications. Recent developments address this issue through so-called predictive safety filters, which assess if a proposed learning-based control input can lead to constraint violations and modifies it if necessary to ensure safety for all future time steps. The theoretical guarantees of such predictive safety filters rely on the model assumptions and minor deviations can lead to failure of the filter putting the system at risk. This paper introduces an auxiliary soft-constrained predictive control problem that is always feasible at each time step and asymptotically stabilizes the feasible set of the original predictive safety filter problem, thereby providing a recovery mechanism in safety-critical situations. This is achieved by a simple constraint tightening in combination with a terminal control barrier function. By extending discrete-time control barrier function theory, we establish that the proposed auxiliary problem provides a `predictive' control barrier function. The resulting algorithm is demonstrated using numerical examples. 
\end{abstract}

\begin{IEEEkeywords}
    Constrained control, NL predictive control, Intelligent systems, Safety
\end{IEEEkeywords}

\section{INTRODUCTION}	
	The increasing availability of computational resources opens new perspectives for
	control engineering, and in particular enables learning-based control, which has
	shown the potential of solving complex high-level tasks.
	Demonstrations include, e.g., human-machine	interactions, where solely a
	`black-box' reward signal describes the desired system behavior.
	As such a general problem formulation is not addressed by classical controller
	specifications in terms of stability w.r.t. pre-specified setpoints or reference trajectories,
 	there is a renewed interest in the development of universal mechanisms to ensure safety
 	of the resulting closed-loop system.
	\par
	Using a modular approach to control system safety and performance as proposed
	in~\cite{Seto1998}, various methods have recently been developed, such as, e.g., control
	barrier functions~\cite{Ames2019} and reachability analysis based safety
	frameworks~\cite{Chen2018} to ensure safety in the form of constraint satisfaction,
	independent of a specific control task.
	These methods typically consist of a \emph{safety controller} that renders a \emph{safe subset}
	of the system constraints invariant. These two components allow to enhance an arbitrary
	performance controller with safety guarantees as follows:
	As long as the system state evolution under the performance controller is contained in the
	safe invariant set, the safety mechanism passively monitors proposed control signals.
	However, as soon as the system state would leave the safe set, the safety mechanism actively
	overrules the performance controller and leverages the safety controller to render the
	safe set invariant.
	As the computation of the required safe set and safety controller is very difficult
	in general, the resulting control performance is often limited due to conservative
	approximations or the required design computations are restricted to small-scale
	systems up to 3-4 state dimensions due to the curse of dimensionality.
	\par
	To overcome this limitation, recent concepts extend potentially conservative safe
	sets and safety controllers during closed-loop operation by a just-in-time
	computation of safe backup plans from the current system state, which are required to terminate
	in a potentially conservative safe set.
	These methods are also known as active set reachability~\cite{Gurriet2018},
	SHERPA~\cite{Mannucci2017}, model predictive safety certification (MPSC)~\cite{Wabersich2019},
	predictive safety filters (PSF)~\cite{Wabersich2018a}, and predictive shielding~\cite{li2020robust}.
	While some of them are based on MPC 
	techniques~\cite{Wabersich2018,Wabersich2019,Wabersich2018a,li2020robust},
	other approaches, e.g.~\cite{Gurriet2018}, extend a conservative safe set online through an
	explicit safety controller.
	As shown in~\cite{Gurriet2020}, these approaches can significantly reduce conservativeness
	and thereby increase acceptance of such safety mechanisms.
	\par
	Despite promising theoretical and experimental results, the major drawback
	of such just-in-time safety computations are unexpected external disturbances
	or constraint violating initial conditions, which cannot always be anticipated
	at the controller design stage. In such cases, even for minor errors in design
	assumptions, the corresponding online problem can become infeasible, failing
	to provide a safe input, when it would be most crucial to recover the system
	from constraint violations.
\subsection{Contributions}
	This paper proposes a predictive barrier function approach to recover infeasible
	predictive controllers in an asymptotically stable fashion through a soft
	constraint recovery mechanism.
	In particular, we consider recovery of predictive safety
	filter problems~\cite{Wabersich2018,Wabersich2019,Wabersich2018a}, which
	are commonly used in combination with learning-based control techniques
	to ensure safety and which are lacking intrinsic stability properties compared
	to, e.g., classical MPC~\cite{rawlings2009model}.
	Different from a simple softening of the state constraints `out-of-the-box'
	\cite{Kerrigan2000}, we propose a softened predictive safety filter problem,
	denoted predictive barrier function problem, which renders the feasible set of
	the predictive safety filter Lyapunov stable, thereby enabling a reliable
	recovery of the safety guarantees from unexpected disturbances or unsafe
	initial conditions.
	\par
	The primary mechanism of the predictive barrier function is to employ
	a control barrier function on the last predicted state with a corresponding
	terminal safe set.
	Through a generalization of existing discrete-time control barrier function theory,
	we formally establish that the value function of the predictive barrier function
	problem itself qualifies as a control barrier function, thereby enlarging the
	region of attraction of a potentially conservative terminal control barrier function.
	As a result, we obtain Lyapunov stability properties with respect to the feasible
	safe set of the original predictive safety filter problem.
	\par
	The novel relation between predictive safety filters and control barrier functions
	additionally allows for combining recent results from control barrier function literature,
	such as safe reinforcement learning with convergence guarantees~\cite{Marvi2020}, together
	with significantly less restrictive predictive safety filters.
	Furthermore, since the optimal value function of the predictive barrier function problem
	represents a continuous measure of safety, it can efficiently be approximated using,
	e.g., artificial neural networks. This allows for a practical implementation for
	highly nonlinear dynamical systems, where the corresponding online problem
	to evaluate the predictive barrier function is challenging to solve in real-time.
	\par
	We provide a design procedure to synthesize the proposed method and demonstrate
	the ability to recover infeasible predictive safety filters	using numerical simulations,
	i.e., a basic linear 2D example and a nonlinear vehicle example.

\subsection{Related work}
	While online predictive safety mechanisms that assume a perfect system
	knowledge~\cite{Gurriet2018,Gurriet2020} can quickly become infeasible,
	robust and stochastic
	formulations~\cite{Mannucci2017,Wabersich2018,Wabersich2019,Wabersich2018a,li2020robust}
	provide safety through recursive feasibility of the problem despite external disturbances.
	Nevertheless, the underlying assumption of these techniques is a feasible initial
	condition as well as an accurate uncertainty description of the disturbance, which can be
	difficult to ensure and can cause controller failure if not satisfied.
	We overcome this limitation by proposing a concept for state constraint relaxation that
	provides a feasible problem to compute stabilizing inputs, even for cases, in which
	robust or stochastic formulations would become infeasible.
	\par
	Infeasibility issues of online predictive control problems have also been
	investigated in the model predictive control (MPC) literature, e.g.~\cite{Kerrigan2000,Zeilinger2014}, the results are, however, limited to linear systems with polytopic state constraints.
	Furthermore, stability properties of the closed-loop system are typically given with respect
	to a steady-state rather than the original feasible set of the hard constrained model predictive
	control problem, which can lead to a longer duration of constraint violation
	in closed-loop in favor of stability with respect to the origin.
	Related to these techniques, the notion of input-to-state stability in
	MPC~\cite{limon2009input} can provide similar recovery properties from disturbances
	or infeasible initial conditions, which are, however, also generally coupled to a
	stability analysis with respect to the origin and rely on a positive definite cost
	function in the MPC problem together with additional assumptions in the nonlinear case.
	\par
	An alternative MPC-related technique to provide feasibility of online
	MPC problems is based on so-called any-time MPC algorithms~\cite{Wills2004,Feller2017}
	that use a relaxation at the optimization algorithm level, to ensure
	similar stability properties as presented in~\cite{Zeilinger2014}.
	This approach is also tied to a positive definite stage cost function, linear dynamics
	and polytopic constraints, as well as a particular optimization approach
	for solving the MPC problem. By ensuring feasibility through a modified formulation of
	the predictive control problem, the presented approach allows for leveraging recent developments
	in the active field of fast real-time optimization
	techniques~\cite{domahidi2012efficientInteriorPoint,Verschueren2019}
	without relying on a specific optimization algorithm.
	\par
	Combining MPC control design with control Lyapunov or control barrier functions
	has also been proposed in~\cite{8887800,8431468,Zeng2020}, where the idea is
	to impose an explicit barrier function constraint on each predicted system state,
	which provides guarantees in terms of constraint satisfaction. While there
	also exists a soft-constrained formulation using control Lyapunov functions~\cite{Grandia2020}, the main
	limitation of these approaches is that each predicted state must be contained in the domain of the
	control barrier/control Lyapunov function. As the design of these functions is a challenging task
	for general systems,~their region of attraction is often restricted to small subsets around linearization points and can significantly limit the set of states for which the resulting MPC provides	theoretical guarantees. In contrast, we impose softened state constraints along the planning horizon and only combine the last predicted state with a control barrier function, which allows for a simple design procedure with an increased feasible set.
\subsection{Notation and common definitions}
	The distance between a vector $x\in\RR^n$ and a set $\AA \subseteq \RR^n$ is
	defined as $|x|_{\AA}\mDef\inf_{y\in\AA} ||x-y||$. By $\mOnes{n}$ we denote
	the vector of ones with length $n\in\NN^+$ and $\max(x,y)$ with $x,y\in\RR^n$
	equals the element-wise maximum value. The set of
	symmetric positive definite matrices of size $n\times n$ is denoted
	by $\mSetPosSymMat{n}$.
	\begin{definition}\label{def:forward_invariance}
		A set $\AA \subseteq \RR$ is said to be positively
		invariant for the autonomous system~$x(k+1) = g(x(k))$
		if for every $x(0)\in\AA$ it holds for all $k\in\NN$ that
		$x(k)\in\AA$.
	\end{definition}
\section{Preliminaries}\label{sec:psf}
	\begin{figure}
		\centering
		\begin{tikzpicture}[scale=0.8]
			\input{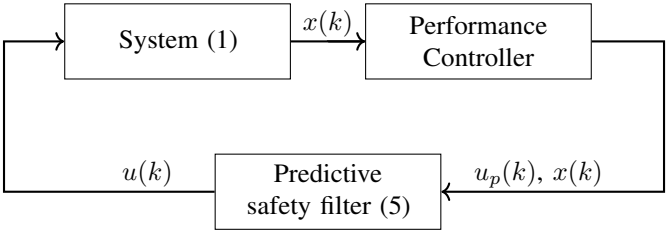}
		\end{tikzpicture}
		\caption{The concept of a model predictive safety filter: Using the current state $x(k)$,
			a task-specific performance controller provides an input $u_p(x(k))$, which is processed
			by the predictive safety filter~\eqref{eq:opt_PSF} and then applied to the real
			system~\eqref{eq:system}.}
		\label{fig:concept}
	\end{figure}
	\begin{figure}
		\centering
		\begin{tikzpicture}[scale=0.9]
			\input{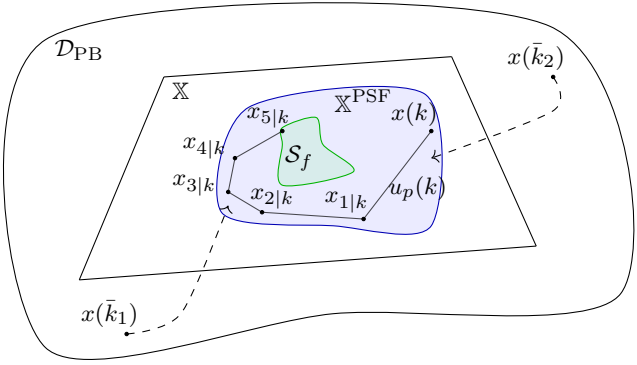}
		\end{tikzpicture}
		\caption{Inner workings of the predictive safety filter~\eqref{eq:opt_PSF}:
		Based on the current state $x(k)$, the predicted state sequence terminates in the terminal safe
		set $\SS_f$ (green set) after 5 planning steps. The feasible set of the safety filter
		problem $\XX^{\mathrm{PSF}}$~\eqref{eq:feasible_set_psf}, i.e. the set of states from which
		$\SS_f$ can be reached within $5$ time steps, is illustrated in blue and is a subset of
		the state constraint set $\XX$ (black lines). For system states $x(\bar k_1)$ and $x(\bar k_2)$
		outside the feasible set $\XX^{\mathrm{PSF}}$, the safety filter cannot provide
		a safe input. We address this limitation in this work by introducing a softened reformulation
		that stabilizes	the feasible set $\XX^{\mathrm{PSF}}$ with enlarged region of attraction
		$\DD_{\mathrm{PB}}$ in Section~\ref{sec:predictive_cbf}.}
		\label{fig:psf}
	\end{figure}
	%
	We consider discrete-time control systems of the form
	\begin{align}\label{eq:system}
			x(k+1) = f(x(k),u(k)),~k\in\NN,
	\end{align}
	with continuous dynamics $\mDefFunction{f}{\RR^n \times \RR^m}{\RR^n}$
	and initial condition $x(0)=x_0$ with $x_0\in\RR^n$.
	The system is subject to hard physical input constraints of the form
	\begin{align}\label{eq:input_constraints}
		u(k) \in \UU \subset \RR^m,
	\end{align}
	with compact $\UU$, and is required to satisfy
	safety constraints in the form of state constraints given by
	\begin{align}\label{eq:state_constraints}
		x(k) \in \XX \mDef \{x\in\RR^n: c(x)\leq 0\} \subset \RR^n,
	\end{align}
	for all time steps $k\in\NN$ where $\mDefFunction{c}{\RR^n}{\RR^{n_x}}$
	is a continuous function.
	While the input constraints~\eqref{eq:input_constraints}
	can be generic compact subsets of $\RR^m$, the state constraints~\eqref{eq:state_constraints}
	need to be represented as a set of nonlinear inequalities, which include,
	e.g., box, ellipsoidal, or polytopic constraints as relevant special cases.
	Relevant learning-based control applications or complex tasks via RL,
	for which a sufficiently accurate model~\eqref{eq:system} is available
	include inverse optimal control, imitation learning, or interactions with
	humans~\cite[Section 4]{Hewing2020}.
	In the following, we briefly review the predictive safety filter formulation 
	as shown in Figure~\ref{fig:concept} and introduce a soft-constrained extension.
	\subsection{Predictive Safety Filter}\label{subsec:psf}
	The overall goal is to provide a modular approach to safety of task-specific
	performance controllers
	\begin{align}\label{eq:performance_control}
		\mDefFunction{u_p}{\RR^n}{\RR^m}
	\end{align}
	as originally proposed in~\cite{Seto1998}, see Figure~\ref{fig:concept}.
	Potentially unsafe inputs $u_p(x(k))$ are processed in real-time by a safety
	filter that decides based on the current system state if the input is safe
	to apply or if it needs to be modified.
	To implement the desired safety filter, we make use of an
	MPC-based concept called predictive safety filter as introduced
	in~\cite{Wabersich2018}.
	In particular, our goal is to extend the MPC-based
	mechanism from~\cite{Wabersich2018} to not only keep the system
	safe but to also recover the safety filter from constraint violations, which can
	result from unexpected external disturbances or infeasible initial conditions.
	To this end, we first recap the nominal formulation of a predictive
	safety filter as introduced in~\cite[Section~5]{Hewing2020} in the following,
	which we will then equip with a recovery mechanism in Section~\ref{sec:predictive_cbf}.
	\par
	A basic model predictive safety filter, realizing the safety filter
	block in Figure~\ref{fig:concept}, is given by
	\begin{subequations}\label{eq:opt_PSF}
	\begin{align}
		\min_{u_{i|k}}~~& \mNormGenSmall{u_p(x(k)) - \mUpred_{0|k}}\label{eq:opt_PSF_cost}\\
		\text{s.t.}  	~~& \text{for all } i=0,..,N-1: \nonumber\\
						~~& \mXpred_{0|k} = x(k), \label{eq:opt_PSF_init}\\
						~~& \mXpred_{i+1|k} = f(\mXpred_{i|k},\mUpred_{i|k}), \label{eq:opt_PSF_dybamics}\\
						~~& \mUpred_{i|k} \in \UU, \label{eq:opt_PSF_input_cons}\\
						~~& \mXpred_{i|k} \in \XX, \label{eq:opt_PSF_state_cons}\\
						~~& \mXpred_{N|k}\in\SS_f, \label{eq:opt_PSF_terminal}
	\end{align}
	\end{subequations}
	where the resulting input applied to the system is given by $u(k) = \mUpredOpt_{0|k}$.
	We use the subscript $i|k$ to emphasize predictive quantities, where, e.g., $x_{i|k}$ is
	the $i$-step-ahead prediction of the state, initialized at $x_{0|k} = x$ at time step $k$.
	Optimal states and inputs will be denoted by an asterisk, e.g. $\mUpredOpt_{i|k}$
	or $\mXpredOpt_{i|k}$.
	The objective~\eqref{eq:opt_PSF_cost} is to achieve minimal deviation between the
	first element of the predicted input sequence, $\mUpredOpt_{0|k}$, and the currently
	requested performance input, $u_p(x(k))$, while satisfying state~\eqref{eq:opt_PSF_state_cons}
	and input constraints~\eqref{eq:opt_PSF_input_cons} for predicted time steps.
	By solving~\eqref{eq:opt_PSF} at every time step, we obtain the desired filtering
	property as follows. If for a specific state $x(k)$ and proposed performance input
	$u_p(x(k))$ the objective~\eqref{eq:opt_PSF_cost} is zero subject to all relevant system
	constraints for all future time steps, then the performance controller is safe and
	will directly be applied since $\mUpredOpt_{0|k}=u_p(x(k))$.
	However, if the objective is greater than zero, the performance control input $u_p(x(k))$
	at the current time step cannot be verified as safe and the safety filter mechanism
	overwrites the performance controller, i.e. $\mUpredOpt_{0|k}\neq u_p(x(k))$.
	The PSF problem makes use of a terminal constraint on the
	last predicted state~\eqref{eq:opt_PSF_terminal}, see Figure~\ref{fig:psf} (green set),
	which is typically assumed to be positive invariant for system~\eqref{eq:system} (Definition~\ref{def:forward_invariance}) under a local safety control law satisfying state and input constraints.
	This technique from MPC literature ensures that initial feasibility of~\eqref{eq:opt_PSF}
	at state $x(k)$ with corresponding optimal input sequence $\mUpredOpt_{i|k}$ implies
	feasibility of~\eqref{eq:opt_PSF} for future time steps in a recursive fashion.
	This can be shown easily through constructing a feasible candidate solution
	for time step $k+1$, which is given by
	$\mUpredCand_{i|k+1}=\{\mUpredOpt_{1|k},\mUpredOpt_{2|k},..,\mUpredOpt_{N|k},\tilde u_f\}$,
	where $\tilde u_f$ is selected such that $\mXpredCand_{N|k+1}=f(\mXpredOpt_{N|k},\tilde u_f)\in\SS_f$
	holds, which is possible due to the invariance property of $\SS_f$~\cite{rawlings2009model,
	chen1998quasiInfiniteHorizonMPC}.
	\par
	Recursive feasibility of~\eqref{eq:opt_PSF} directly implies constraint satisfaction
	for all future time steps together with the fact that the feasible set
	\begin{align}\label{eq:feasible_set_psf}
		\XX^{\mathrm{PSF}}\mDef\{x\in\RR^n ~|~\eqref{eq:opt_PSF}\text{ is feasible}\},
	\end{align}
	as illustrated in Figure~\ref{fig:psf} (blue set), is rendered forward invariant
	under $u(k)=\mUpredOpt_{0|k}$ and realizes an implicitly defined safe set.
	\par
	While there exist extensions of~\eqref{eq:opt_PSF} that can deal with external disturbances in a
	robust~\cite{Wabersich2018a} or stochastic manner~\cite{Wabersich2019}, these methods rely
	on specific assumptions on the disturbance in the form of a known magnitude or probability
	distribution.
	These techniques therefore still suffer from infeasibility issues in case of
	unmodeled external disturbances $d(k)\in\RR^n$ such that
	$f(x(k),u(k))+d(k)\notin\XX^{\mathrm{PSF}}$ or even $f(x(k),u(k))+d(k)\notin\XX$.
	These situations commonly arise in practice with severe consequences, as no sensible
	input can be generated from an infeasible PSF problem~\eqref{eq:opt_PSF} to ensure system safety,
	as illustrated with states $x(\bar k_1)$ and $x(\bar k_2)$ in Figure~\ref{fig:psf}.
	The goal in the following is to render the feasible set of the safety filter~\eqref{eq:feasible_set_psf}
	in such situations asymptotically stable through a soft-constrained recovery mechanism.
	\subsection{Simplistic soft constrained predictive safety filter}\label{subsec:soft_constraints}
	\begin{figure}
		\centering
		\begin{tikzpicture}[scale=0.8]
			\input{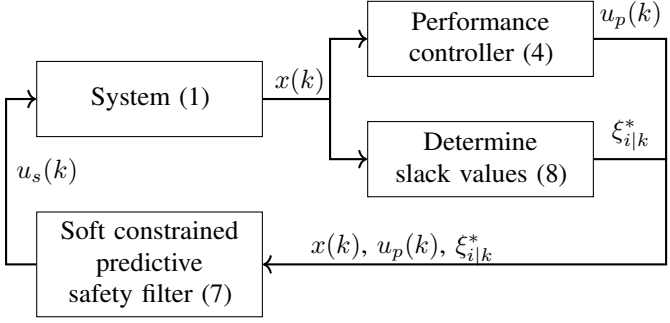}
		\end{tikzpicture}
		\caption{Predictive safety filter with a simplistic soft constrained recovery mechanism:
			Based on the current state $x(k)$, the performance controller~\eqref{eq:performance_control}
			provides a desired control input $u_p(x(k))$. At the same time,
			slack values~$\xi_{i|k}^*$ are computed by either solving the simple
			feasibility problem~\eqref{eq:opt_pre_soft_PSF} according to Section~\ref{subsec:psf}
			or by evaluating the predictive barrier function~\eqref{eq:opt_BF} as described in
			Section~\ref{sec:predictive_cbf}, both ensuring existence of a feasible solution
			to the softened predictive safety filter problem~\eqref{eq:opt_soft_PSF}
			for all $x(k)\in\RR^n$.}
		\label{fig:psf_soft_concept}
	\end{figure}
	%
	In practical implementations of the predictive safety filter, a softening of the state
	constraints as proposed, e.g., in \cite{DiPillo1989,Kerrigan2000} for MPC, can guarantee
	feasibility of the online optimization problem~\eqref{eq:opt_PSF} despite unexpected disturbances.
	In this section, we first recap a simplistic `separation of objectives' approach and discuss
	its limitations, which motivates the predictive control barrier function presented in
	Section~\ref{sec:predictive_cbf}. The simplistic soft constrained PSF problem is given by
	\begin{subequations}\label{eq:opt_soft_PSF}
	\begin{align}
		\min_{u_{i|k}}~~& \mNormGenSmall{u_p(x(k)) - \mUpred_{0|k}}\label{eq:opt_soft_PSF_cost}\\
		\text{s.t.}  	~~& \text{for all } i=0,..,N-1: \nonumber\\
						~~& \eqref{eq:opt_PSF_init}-\eqref{eq:opt_PSF_input_cons}, \label{eq:opt_soft_PSF_hard_cons}\\
						~~& \mXpred_{i|k} \in \XX(\xi_i^*), \label{eq:opt_soft_PSF_state_cons}\\
						~~& \mXpred_{N|k}\in\SS_f(\xi_N^*). \label{eq:opt_soft_PSF_terminal}
	\end{align}
	\end{subequations}
	Thereby, \eqref{eq:opt_soft_PSF_state_cons}, \eqref{eq:opt_soft_PSF_terminal} are soft
	constraints of the form $\XX(\xi_{i|k}^*)=\{x\in\RR^n|c(x)\leq \xi_{i|k}^*\}$ with pre-computed
	slacks according to
	\begin{subequations}\label{eq:opt_pre_soft_PSF}
	\begin{align}
		\xi_{i|k}^* = \argmin_{\xi_{i|k},u_{i|k}}~~&~\sum_{i=0}^{N} \Vert \xi_i \Vert
		\label{eq:opt_pre_soft_PSF_obj}\\
		\text{s.t.}  	~~& \text{for all } i=0,..,N-1:\\
						~~& \eqref{eq:opt_PSF_init}-\eqref{eq:opt_PSF_input_cons} \\
						~~& \mXpred_{i|k} \in \XX(\xi_i),~~0\leq \xi_i,\\
						~~& \mXpred_{N|k}\in\SS_f(\xi_N),~~0\leq \xi_N,
	\end{align}
	\end{subequations}
	as depicted	in Figure~\ref{fig:psf_soft_concept}. This ensures feasibility of~\eqref{eq:opt_soft_PSF} for any input sequence,
	even if $\mXpred_{i|k}\notin\XX$, in which case the violation is penalized in the objective for determining the slack\
	values~\eqref{eq:opt_pre_soft_PSF_obj}.
	By construction, this two-step approach provides the same optimal solution
	as~\eqref{eq:opt_PSF} for all $x\in\XX^{\mathrm{PSF}}$ and preserves feasibility otherwise.
	\begin{remark}
		Although the additional feasibility problem~\eqref{eq:opt_pre_soft_PSF} introduces	a second optimization
		stage at each sampling time, it is important to note that it can be solved in parallel
		with the performance controller as it only depends on the current system state.
		Additionally,~\eqref{eq:opt_pre_soft_PSF} provides a feasible warm start of~\eqref{eq:opt_soft_PSF},
		that can be leveraged by many optimization algorithms.
	\end{remark}                                                              
	While \eqref{eq:opt_soft_PSF} and \eqref{eq:opt_pre_soft_PSF} ensure feasibility, it is
	unclear if the resulting control input drives the state $x(k)$ back into $\XX^{\mathrm{PSF}}$ at
	some future point in time from $x(0)\notin\XX^{\mathrm{PSF}}$. In the following, we address this issue by
	introducing a modified formulation of the feasibility problem~\eqref{eq:opt_pre_soft_PSF},
	called predictive barrier function problem, that ensures feasibility of the soft constrained predictive
	safety filter problem~\eqref{eq:opt_soft_PSF} and is guaranteed to recover feasibility of \eqref{eq:opt_PSF}. 
	\section{Predictive control barrier functions}\label{sec:predictive_cbf}
	The proposed predictive control barrier function problem reads
	\begin{subequations}\label{eq:opt_BF}
	\begin{align}
		\mH(x)
			\mDef\min_{u_{i|k},\xi_{i|k}}~~&\alpha_f \xi_{N|k}
			+ \sum_{i=0}^{N-1}\mNormGenSmall{\xi_{i|k}} \label{eq:opt_BF_cost}\\
		\text{s.t.}  	~~& \text{for all } i=0,..,N-1: \nonumber\\
						~~& \mXpred_{0|k} = x, \label{eq:opt_BF_init}\\
						~~& \mXpred_{i+1|k} = f(\mXpred_{i|k},\mUpred_{i|k}), \label{eq:opt_BF_dynamics}\\
						~~& \mXpred_{i|k} \in \XX_i(\xi_{i|k}), ~ 0\leq \xi_{i|k},\label{eq:opt_BF_state_cons}\\
						~~& \mUpred_{i|k} \in \UU, \label{eq:opt_BF_input_cons}\\
						~~& h_f(\mXpred_{N|k})\leq\xi_{N|k}, ~0\leq\xi_{N|k}, \label{eq:opt_BF_terminal}
	\end{align}
	\end{subequations}
	which implements a modified (tightened) state constraint~\eqref{eq:opt_BF_state_cons}
	as detailed in Section~\ref{subsec:tightend_constraints} and a more specific
	terminal constraint~\eqref{eq:opt_BF_terminal} as formalized in
	Section~\ref{subsec:terminal_configuration} compared to~\eqref{eq:opt_pre_soft_PSF}.
	Problem~\eqref{eq:opt_BF} thereby provides a Lyapunov-like value function
	$\mH(x)\geq 0$,~which ensures asymptotic stability of the set
	of states
	\begin{align}\label{eq:safe_set_BF}
		\SS_{\mathrm{PB}}=\{x\in\RR^n:\mH(x)=0\},
	\end{align}
	for which we can find an optimal solution to~\eqref{eq:opt_BF} without state constraint violations
	along the prediction. To this end, we guarantee a decrease of $\mH(x)$ for all
	$x\notin\SS_{\mathrm{PB}}$ between two consecutive time-steps under application of
	$u(k)=\mUpredOpt_{0|k}$.
	\par
	The decrease is achieved via two components. An iterative tightening of the state constraints
	along predictions in~\eqref{eq:opt_BF_state_cons}, which is detailed in
	Section~\ref{subsec:tightend_constraints}, ensures a reduction of the slacks
	$\sum_{i=0}^{N-1}\mNormGenSmall{\xi^*_{i|k}}$ from one time step to the next if $\xi^*_{i|k}\neq 0$ for
	some $i=1,..,N-1$.
	A decrease of the term $\alpha_f\xi^*_{N|k}$ in~\eqref{eq:opt_BF_cost}
	is obtained by introducing a terminal set of the form $\SS_f\mDef \{x\in\RR^n|h_f(x)\leq 0\}$ and
	requiring $\SS_f$ to be a safe set corresponding to a so-called control barrier function $h_f(x)$ as
	it will be introduced in Section~\ref{subsec:terminal_configuration}.
	\par
	These modifications will allow us in Section~\ref{subsec:theoretical_guarantees} to
	establish that the proposed scheme as summarized in Algorithm~1 renders $\SS_{\mathrm{PB}}$ in~\eqref{eq:safe_set_BF} asymptotically stable.
	Note that due to the constraint tightening in~\eqref{eq:opt_BF_state_cons}, the set $\SS_{\mathrm{PB}}$	is a subset of the nominal feasible set of the PSF problem~\eqref{eq:feasible_set_psf}, i.e.
	$\SS_{\mathrm{PB}}\subset \XX^{\mathrm{PSF}}$. Establishing asymptotic stability 
	of $\SS_{\mathrm{PB}}$ therefore implies the desired recovery from constraint violations.~
	\begin{algorithm}[t]
		\caption{Predictive safety filter with recovery mechanism}
		\label{alg:half_space_prs}
		\begin{algorithmic}[1]
			\FOR{$k=0,1,2..$}
				\STATE Measure system state $x(k)$
				\STATE Evaluate desired performance input $u_p(k)$ and determine slack
					values $\{\xi^*_{i|k}\}$ by solving~\eqref{eq:opt_BF} in parallel
				\STATE Solve
					\begin{align*}
					\{\mUpredOpt_{i|k}\} \in \argmin_{u_{i|k}}~~& \mNormGenSmall{u_p(x(k)) - \mUpred_{0|k}}\\
					\text{s.t.}  	~~& \text{for all } i=0,..,N-1: \nonumber\\
									~~& \eqref{eq:opt_PSF_init}-\eqref{eq:opt_PSF_input_cons}, \\
									~~& \mXpred_{i|k} \in \XX_i(\xi^*_{i|k}), \\
									~~& h_f(\mXpred_{N|k})\leq\xi^*_{N|k}.
					\end{align*}
				\STATE Apply $u(k)\leftarrow \mUpredOpt_{0|k}$ to system~\eqref{eq:system}
			\ENDFOR
		\end{algorithmic}
	\end{algorithm}
	\subsection{Tightened soft constraints}\label{subsec:tightend_constraints}
	While the simple state and terminal constraint softening from Section~\ref{subsec:soft_constraints}
	ensures feasibility, e.g., if $\xi^*_{0|k}=0$, $\xi^*_{j|k}\neq 0$, and $\xi^*_{N|k}=0$ for some
	$j>0$, it is not clear, whether the total amount of slack $\sum_{i=0}^{N} \Vert \xi_{i|k}^* \Vert$ will be reduced
	at the next time step $k+1$ using the scheme as depicted in Figure~\ref{fig:psf_soft_concept}. More precisely,
	the construction of a feasible candidate at time $k+1$ using a common shifting operation of the optimal
	solution at time $k$ would only keep the amount of total slack constant.~
	This is due to the fact that the resulting stage cost function, given by~$\|\xi_{i|k}^*(x(k))\|=\|\max(0, c(x_{i|k}^*(x(k))))\|$,
	is not positive definite in $x(k)$ w.r.t. the desired target set $\SS_{\mathrm{PB}}$~\eqref{eq:feasible_set_psf} and therefore prohibits
	application of existing MPC stability theory~\cite{rawlings2009model}.
	\par
	To guarantee an overall decrease of $\sum_{i=0}^{N}\|\xi_{i|k}\|$ towards zero in this case,
	we perform a simple state constraint tightening along the prediction horizon. This
	tightening is required to render the constraints more restrictive with each
	prediction step. Given a successfully solved instance of~\eqref{eq:opt_BF},
	this allows to construct a candidate solution to~\eqref{eq:opt_BF} at the next time-step,
	with a reduced slack sequence according to the tightening increments between two prediction steps.
	%
	\par
	The iterative tightening~\eqref{eq:opt_BF_state_cons} of the state
	constraints~\eqref{eq:state_constraints} along the planning horizon $i=0,1,..,N-1$
	is defined as
	\begin{align}
		\XX_{i}\mDef\{x\in\RR^n:c_j(x) \leq -\Delta_i~\forall j\}
	\end{align}
	for a strictly increasing sequence $\Delta_i$ with $\Delta_0=0$.
	The overall corresponding soft constraints are defined as
	\begin{align}\label{eq:def_tight_state_constraints}
		\XX_{i}(\xi)\mDef\{x\in\RR^n:c(x) \leq -\Delta_i\mOnes{} + \xi\}
	\end{align}
	with $\xi\geq 0$~according to~\eqref{eq:opt_BF_state_cons}.
	\par
	While the resulting stage cost in~\eqref{eq:opt_BF_cost} with $\xi_{i|k}^*=\max(0, c(x_{i|k}^*)+\Delta_i)$
	is still not positive definite w.r.t. the set~\eqref{eq:feasible_set_psf}, we can
	construct a modified feasible candidate slack sequence
		$\bar \xi_{i|k+1} = \xi_{i+1|k} +(\Delta_{i} - \Delta_{i+1})\mOnes{}$
	at time $k+1$ due to the tightening~\eqref{eq:def_tight_state_constraints}, which ensures a monotonic
	decrease of $\sum_{i=0}^N \|\bar \xi_{i|k+1}\|$ even in the case
	that $\xi_{N|k}=0$ as discussed earlier in this section. This will play a central role in Section~\ref{subsec:theoretical_guarantees}
	to establish asymptotic stability of the feasible set $\SS_{\mathrm{PB}}$~\eqref{eq:feasible_set_psf}.
	Mechanisms to ensure a decrease even if we cannot reach the terminal safe set $\SS_f$ within the
	prediction horizon, i.e. $\bar\xi_{N|k+1}>0$, will be discussed next.~
	\subsection{Terminal control barrier function}\label{subsec:terminal_configuration}
	To ensure a decrease towards $\sum_{i=0}^N \|\xi_{i|k}\|=0$ in cases where
	$\xi_{N|k}\neq 0$, we design the terminal constraint by selecting $h_f$ to
	be a so-called control barrier function with a corresponding terminal safe set $\SS_f$. 
	\begin{definition}\label{def:barrier_function}
		Function $\mDefFunction{h}{\DD}{\RR}$ is called a discrete-time
		control barrier function with a corresponding safe set
		$\SS\mDef\{x\in\RR^n:h(x)\leq 0\}\subset\DD$,
		if $\SS$ and $\DD$ are non-empty and compact, $h(x)$ is continuous on $\DD$, and
		if there exists a continuous function $\mDefFunction{\Delta h}{\DD}{\RR}$
		with $\Delta h(x) >0$ for all $x \in \DD\setminus \SS$ such that 
		\begin{subequations}\label{eq:def_barrier_function_12}
			\begin{align}
				\forall x\in\DD\setminus\SS: & \inf_{u\in\UU} h(f(x,u))-h(x) \leq -\Delta h(x)
					\text{ and}
					\label{eq:def_barrier_function_1}\\
				\forall x\in\SS:&\inf_{u\in\UU} h(f(x,u))\leq 0.
					\label{eq:def_barrier_function_2}
			\end{align}
		\end{subequations}
		The set of safe control inputs at $x\in\DD$ w.r.t. $h$ is given by
		\begin{align}
			K_{\mathrm{CBF}}(x)\mDef
			\begin{cases}
				K_{\mathrm{CBF}}^1(x),~x\in\DD\setminus\SS,
				\\
				K_{\mathrm{CBF}}^2(x),~x\in\SS
			\end{cases}
			\label{eq:def_barrier_function_3}
		\end{align}
		with $K_{\mathrm{CBF}}^1(x)\mDef\{u\in\UU:h(f(x,u))-h(x)\leq-\Delta h(x)\}$
		and $K_{\mathrm{CBF}}^2(x)\mDef\{u\in\UU:h(f(x,u))\leq 0\}$.\END
	\end{definition}
	\begin{assumption}\label{ass:terminal_cbf}
		The function $\mDefFunction{h_f}{\DD_f}{\RR}$
		in~\eqref{eq:opt_BF_terminal} is a control barrier function
		according to Definition~\ref{def:barrier_function}
		with corresponding safe set denoted by~$\SS_f$.
	\end{assumption}
	\begin{remark}\label{rem:definition_control_barrier_function}
		While Definition~\ref{def:barrier_function} is inspired by common control barrier function
		concepts~\cite{Ames2019,Ohnishi2019} (see, e.g., \cite[Remark 3]{Ames2019} for
		alternative formulations), we do not require an exponential decrease
		in~\eqref{eq:def_barrier_function_1}, i.e.~$-\Delta h(x)=-\gamma(h(x))$ for some
		extended $\KK_\infty$ function $\gamma$. Instead, we only require the existence of a
		continuous function $\Delta h(x)$, bounding the decrease between two
		consecutive time steps.
		~Available discrete-time control barrier function design techniques can therefore be applied
		to satisfy Assumption~\ref{ass:terminal_cbf}. In addition, we present a principled
		design procedure in Section~\ref{sec:terminal_cbf_design} for linear and nonlinear
		systems with polytopic constraints using model predictive control
		related techniques.
	\end{remark}
	From Assumption~\ref{ass:terminal_cbf}, we can directly conclude that in case
	$\mXpredOpt_{N|k}\notin \SS_f$ there exists a candidate input and slack $\bar u_{N-1|k+1}$
	and $\bar \xi_{N|k+1}$ that yield a negative cost difference $\alpha (\bar \xi_{N|k+1} - \xi^*_{N|k})\leq
	- \Delta h_f(\mXpredOpt_{N|k})<0$, which can be scaled through the parameter $\alpha >0$ in~\eqref{eq:opt_BF_cost}.
	Together with the constraint tightening	from Section~\ref{subsec:tightend_constraints}, this
	will allow us to establish	the desired asymptotic stability properties in
	the following.
	\subsection{Theoretical analysis}\label{subsec:theoretical_guarantees}
	In this section, we formally show asymptotic stability of the feasible safe set
	of states $S_{\mathrm{PB}}$ in~\eqref{eq:safe_set_BF} under application of Algorithm~1.
	To this end, we first recap Lyapunov stability of invariant
	sets similar to~\cite[Appendix B.2]{rawlings2009model}.
	\par
	In a second step, we establish an intermediate result, implying that the smaller
	terminal safe set $\SS_f\subset\SS_{\mathrm{PB}}$ according to
	Assumption~\ref{ass:terminal_cbf} can be rendered asymptotically stable
	within an enlarged terminal domain $\DD_f$, $\SS_f \subset \DD_f$ by applying
	safe terminal control inputs $u(k)\in K_{\mathrm{CBF},f}(x)$. This is done
	by relating the control barrier function $h_f(x)$ to the Lyapunov
	stability results from the first step.
	\par
	We then combine these results together with the constraint tightening
	mechanism from Section~\ref{subsec:tightend_constraints} in a third step
	to prove that the predictive barrier function $\mH(x)$ in~\eqref{eq:opt_BF}
	is a control barrier function according to Definition~\ref{def:barrier_function}
	with desired safe set $\SS_{\mathrm{PB}}$, which will be rendered asymptotically stable
	under application of Algorithm~1. To this end, we define
	a corresponding domain $\DD_{\mathrm{PB}}\supset \SS_{\mathrm{PB}}$, for which we
	establish local continuity of $\mH(x)$ as well as the decrease~\eqref{eq:def_barrier_function_12}
	between consecutive	time steps. Figure~\ref{fig:set_visualization_analysis} visualizes
	the relations between the different sets.
	\\
	\subsubsection{Lyapunov stability with respect to invariant sets}\label{par:lyapunov_stability}
	Consider the autonomous closed-loop system~\eqref{eq:system}
	\begin{align}\label{eq:autonomous_system}
		x(k+1) = f(x(k),\kappa(x(k))) =: g(x(k)),~k \in \NN,
	\end{align}
	under application of some control law $\kappa(x)$, subject to input and state
	constraints \eqref{eq:input_constraints} and \eqref{eq:state_constraints},
	and with initial condition $x(0)=x_0$. We formalize stability of an
	invariant set with respect to the system~\eqref{eq:autonomous_system} as follows.
	\begin{definition}\label{def:asy_stable}
		Let $\SS$ and $\DD$ be non-empty, compact, and positively
		invariant sets for system~\eqref{eq:autonomous_system}
		such that $\SS \subset \DD$. The set $\SS$ is called an
		asymptotically stable set for system~\eqref{eq:autonomous_system}
		in $\DD$, if for all $x(0)\in\DD$ the following conditions hold:
		\begin{subequations}\label{eq:def_asy_stable}
			\begin{align}
				\forall \epsilon>0~\exists \delta >0:~&
					|x(0)|_\SS<\delta \Rightarrow \forall k > 0:|x(k)|_\SS < \epsilon,
					\label{eq:def_asy_stable_1}\\
				\exists \bar \delta:~&
					|x(0)|_\SS<\bar\delta\Rightarrow\lim_{k\rightarrow\infty}x(k)\in \SS.
					\label{eq:def_asy_stable_2}
			\end{align}
		\end{subequations}
		\END
	\end{definition}
	As shown in Appendix~\ref{app:lyap_proof}, we can extend existing
	continuous-time Lyapunov stability proofs with respect to equilibrium points
	to show the more general case in Definition~\ref{def:asy_stable}.
	This allows us in the next step to show asymptotic stability of safe sets
	according to Definition~\ref{def:barrier_function} under safe
	control inputs by constructing a Lyapunov function from the corresponding
	control barrier function.
	While there are similar results in model predictive control theory
	available \cite[Theorem B.13]{rawlings2009model}, it is important to note that the corresponding assumptions typically rely on a positive defininte stage cost function with respect to $\SS$, which is not present in our case.
	~
	\\
	\subsubsection{Asymptotic stability of the safe terminal set}
	\begin{theorem}\label{thm:cbf}
		
		Let	$\DD\subset \RR^n$ be a non-empty and compact set.
		Consider a control barrier function $\mDefFunction{h}{\DD}{\RR}$
		with $\SS=\{x\in\RR^n:h(x)\leq 0\}\subset\DD$ according to
		Definition~\ref{def:barrier_function}. If $\DD$ is a forward invariant
		set for system~\eqref{eq:system} under $u(k)=\kappa(x(k))$ for all
		$\mDefFunction{\kappa}{\DD}{\UU}$ with $\kappa(x)\in K_{\mathrm{CBF}}(x)$,
		then it holds that
		
		\begin{enumerate}
			\item \label{item:thm_cbf_1}$\SS$ is a forward invariant set,
			\item \label{item:thm_cbf_2}$\SS$ is asymptotically stable in $\DD$.
		\end{enumerate}
	\end{theorem}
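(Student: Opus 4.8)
The plan is to prove item~\ref{item:thm_cbf_1} directly from the structure of $K_{\mathrm{CBF}}$, and to prove item~\ref{item:thm_cbf_2} by building a Lyapunov function out of $h$ and invoking the stability result for invariant sets from Appendix~\ref{app:lyap_proof}. Throughout, write $g(x)\mDef f(x,\kappa(x))$ for the closed loop~\eqref{eq:autonomous_system}; the hypotheses give $\kappa:\DD\to\UU$ with $\kappa(x)\in K_{\mathrm{CBF}}(x)$ and forward invariance of $\DD$ under $g$, so $g:\DD\to\DD$ is well defined and $h(g(x))$ makes sense for every $x\in\DD$.

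For item~\ref{item:thm_cbf_1}, take $x(0)\in\SS$ and argue by induction: if $x(k)\in\SS$ then $\kappa(x(k))\in K_{\mathrm{CBF}}^2(x(k))$, so $h(x(k+1))=h(f(x(k),\kappa(x(k))))\le 0$, i.e.\ $x(k+1)\in\SS$; since $\SS\subset\DD$ the trajectory never leaves $\DD$. Hence $\SS$ is forward invariant. For item~\ref{item:thm_cbf_2}, I would take $V(x)\mDef\max(0,h(x))$ as candidate Lyapunov function. It is continuous on the compact set $\DD$, nonnegative, and vanishes exactly on $\SS$ while being strictly positive on $\DD\setminus\SS$; by the standard compactness argument for continuous functions that are positive definite with respect to a compact set (the same one underlying Appendix~\ref{app:lyap_proof}) this yields class-$\KK$ bounds $\underline\alpha(|x|_\SS)\le V(x)\le\overline\alpha(|x|_\SS)$ on $\DD$. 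The remaining task is a decrease estimate $V(g(x))-V(x)\le-\rho(x)$ with $\rho$ continuous on $\DD$ and $\rho(x)>0$ on $\DD\setminus\SS$; together with item~\ref{item:thm_cbf_1} this makes $V$ a Lyapunov function for $g$ relative to $\SS$ in $\DD$, and Appendix~\ref{app:lyap_proof} then delivers~\eqref{eq:def_asy_stable}.

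To get the decrease: for $x\in\SS$, item~\ref{item:thm_cbf_1} gives $g(x)\in\SS$, so $V(g(x))=V(x)=0$; for $x\in\DD\setminus\SS$ we have $V(x)=h(x)>0$ and, since $\kappa(x)\in K_{\mathrm{CBF}}^1(x)$, $h(g(x))\le h(x)-\Delta h(x)$, hence by monotonicity of $t\mapsto\max(0,t)$,
\[
  V(g(x))\le\max\bigl(0,\,h(x)-\Delta h(x)\bigr).
\]
If $h(x)>\Delta h(x)$ this gives $V(g(x))-V(x)\le-\Delta h(x)$; if $0<h(x)\le\Delta h(x)$ it gives $V(g(x))=0$, so $V(g(x))-V(x)=-h(x)$. \textbf{The main obstacle} is exactly this second case: the CBF decrease rate $-\Delta h(x)$ does not carry over verbatim to $V$, because $g(x)$ may overshoot well inside $\SS$. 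I would resolve this by clipping the rate, setting $\rho(x)\mDef\min(\Delta h(x),h(x))$ on $\DD\setminus\SS$ and $\rho(x)\mDef 0$ on $\SS$; then $V(g(x))-V(x)\le-\rho(x)$ holds in both cases, $\rho$ is strictly positive on $\DD\setminus\SS$, and $0\le\rho(x)\le V(x)$ everywhere shows, by a squeeze at points of $\SS$, that $\rho$ is continuous on $\DD$.

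A secondary technical point, routine but worth stating since (unlike classical MPC) no positive-definite stage cost is available, is extracting a class-$\KK$ lower bound $\rho(x)\ge\alpha_3(|x|_\SS)$: for each $r>0$ the set $\{x\in\DD:|x|_\SS\ge r\}$ is compact and disjoint from $\SS$, so $\rho$ attains a positive minimum there, which is assembled into $\alpha_3$ in the usual way. With $V,\underline\alpha,\overline\alpha,\alpha_3$ in hand, the hypotheses of the Lyapunov theorem of Appendix~\ref{app:lyap_proof} are met, and asymptotic stability of $\SS$ in $\DD$ follows, completing item~\ref{item:thm_cbf_2}.
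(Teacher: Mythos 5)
Your proof is correct and follows essentially the same route as the paper's: forward invariance of $\SS$ by induction via $K_{\mathrm{CBF}}^2$, and asymptotic stability by taking $V(x)=\max(0,h(x))$ with the clipped decrease rate $\min(\Delta h(x),h(x))$ (identical to the paper's $\Delta V(x)=\max(0,\min(\Delta h(x),V(x)))$) and invoking the set-valued Lyapunov theorem of Appendix~\ref{app:lyap_proof}. The only superfluous step is your construction of class-$\KK$ bounds $\underline\alpha,\overline\alpha,\alpha_3$, which the paper's Theorem~\ref{thm:lyapunov_stability} deliberately avoids requiring -- continuity and local positive definiteness of $V$ and $\Delta V$ with respect to $\SS$ in $\DD$ suffice.
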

	\begin{proof}
		The proof can be found in Appendix~\ref{app:cbf_proof} and is
		based on establishing a relation between control barrier functions
		according to Definition~\ref{def:barrier_function} and Lyapunov stability
		results with respect to sets as presented in Appendix~\ref{app:lyap_proof}.
	\end{proof}
	Note that the set $\DD$ is sometimes also referred to as region of attraction.
	While Theorem~\ref{thm:cbf} together with Assumption~\ref{ass:terminal_cbf} implies invariance of the terminal set $\SS_f$ under application of Algorithm~1, it additionally provides
	asymptotic stability for a superset $\DD_f\supset\SS_f$, which accounts for the case
	that $\SS_f$ cannot be reached within $N$-time steps, i.e. if we encounter non-zero
	terminal slack $\xi_{N|k}^*>0$, see Figure~\ref{fig:set_visualization_analysis}.
	\\
	\subsubsection{The function $\mH(x)$ is a control barrier function}
	The combination of Theorem~\ref{thm:cbf} with the modified soft constraints from
	Sections~\ref{subsec:tightend_constraints} and \ref{subsec:terminal_configuration} finally allows
	us to establish that the optimal value function $\mH(x)$ in~\eqref{eq:opt_BF} is a control
	barrier function with corresponding safe set given by
	\begin{align}\label{eq:predictive_barrier_function_safe_set}
		\SS_{\mathrm{PB}}\mDef\{x\in\RR^n|\mH(x) = 0\}.
	\end{align}
	\par
	While problem~\eqref{eq:opt_BF} is feasible for all $x\in\RR^n$, it is important to note that the
	decrease between consecutive time steps must be
	bounded by a continuous	function according to Definition~\ref{def:barrier_function}.
	Continuity of the dynamics~\eqref{eq:system} and constraints~\eqref{eq:state_constraints}
	can be used to establish a continuous bound on the decrease of~$\sum_{i=0}^{N-1}\mNormGenSmall{\xi^*_{i|k}}$
	as discussed in Section~\ref{subsec:tightend_constraints}. To guarantee a decrease of the term $\alpha_f \xi^*_{N|k}$
	as discussed in Section~\ref{subsec:terminal_configuration}, we need to ensure that $x^*_{N|k}\in\DD_f$ due
	to~\eqref{eq:opt_BF_terminal}. We therefore define the domain $\DD_{\mathrm{PB}}$,
	of $\mH(x)$, using level set concepts from MPC theory~\cite[Chapter 2.6]{rawlings2009model}
	as
	\begin{align}
		\DD_{\mathrm{PB}} \mDef \{x\in\RR^n|h_{\mathrm{PB}}(x)\leq \alpha_f\gamma_f\}.
		\label{eq:domain_BF}
	\end{align}
	with $\gamma_f>0$ such that for all $x\in\RR^n$ with $h_f(x)\leq \gamma_f$ it holds $x\in\DD_f$.
	The relations of the domains $\DD_f$ and $\DD_{\mathrm{PB}}$ are illustrated in
	Figure~\ref{fig:set_visualization_analysis}.
 	\begin{remark}
		The level $\gamma_f$ can be computed for general domains $\DD_f$ by starting
		with $\gamma_f = \max_{x\in\DD_f} h_f(x)$ and iteratively shrinking $\gamma_f$ until
		$\{x|h_f(x)\leq \gamma_f,x\notin\DD_f\}$ is an empty set, which can be verified
		using constrained optimization techniques.~ In
		Section~\ref{sec:terminal_cbf_design} we provide a specific procedure
		to obtain the terminal ingredients $h_f$, $\SS_f$, and $\DD_f$.
	\end{remark}
	Large values of the terminal weight $\alpha_f>0$ in the
	domain $\DD_{\mathrm{PB}}$~\eqref{eq:domain_BF} of $\mH$ support large
	amounts of state constraint violations that we are able to recover.
	More precisely, if we can find a predicted trajectory with terminal
	cost $h_f(\mXpred_{N|k}) \leq \alpha_f\gamma_f$, then the magnitude
	of admissible cumulative state constraint violations
	$\sum_{i=0}^{N-1}\Vert \xi_{i|k}\Vert \leq \alpha_f\gamma_f - h_f(\mXpred_{N|k})$,
	is proportional to $\alpha_f$. From this we can also conclude that the domain \eqref{eq:domain_BF}
	is guaranteed to be larger or equal compared to the domain $\mathcal D_f$ of the
	terminal control barrier function, see Figure~\ref{fig:set_visualization_analysis}.~
	While we formalize in the following the main result that
	a lower bound on $\alpha_f$ ensures that $\mH$ is a control barrier function according to
	Definition~\ref{def:barrier_function}, the previous discussion
	suggests selecting even larger values for $\alpha_f$ up to numerical limitations
	for solving~\eqref{eq:opt_BF}.
	\begin{figure}
		\centering
		\begin{tikzpicture}[scale=0.9]
			\input{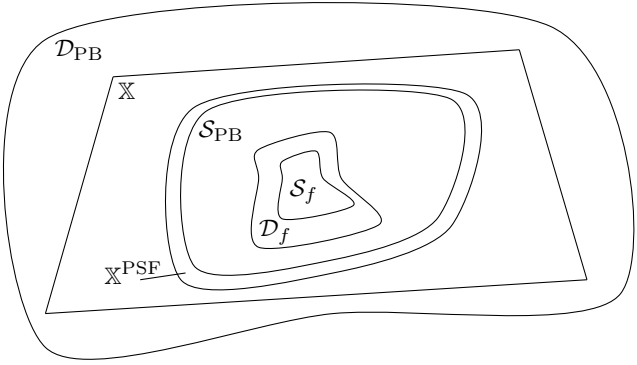}
		\end{tikzpicture}
		\caption{Overview of the different sets:
		The set $\XX$ represents the state constraints according to
		\eqref{eq:state_constraints} and $\XX^{PSF}$ represents the
		safe set of the hard constrained predictive safety filter problem~\eqref{eq:feasible_set_psf}
		with terminal safe set $\SS_f$ as described in the preliminaries,
		Section~\ref{subsec:psf}. Due to the softening of the state constraints in the
		predictive control barrier problem~\eqref{eq:opt_BF}, we obtain an
		enlarged region of attraction $\DD_{PB}$ according to
		\eqref{eq:domain_BF}, from which we can recover infeasible
		states. Due to the constraint tightening in~\eqref{eq:opt_BF_state_cons},
		the target set $\SS_{PB}$ for recovery is only a subset of the original hard
		constrained feasible set $\XX^{PSF}$. Finally, the set $\DD_f$ corresponds to the region of
		attraction of the terminal safe set $\SS_f$, used as terminal constraint
		in~\eqref{eq:opt_BF_terminal}.
		}
		\label{fig:set_visualization_analysis}
	\end{figure}
	\begin{theorem}\label{thm:BF}
		Consider the predictive control barrier function $\mH$ as defined in~\eqref{eq:opt_BF}
		and assume that~$\UU$ and $\XX_{0}(\xi)$ as defined in~\eqref{eq:def_tight_state_constraints}
		are compact for all $0\leq \xi < \infty$. If Assumption~\ref{ass:terminal_cbf} holds
		with $\SS_f\subset\XX_{N-1}(0)$ and $h_f$ continuous on $\RR^n$, then
		the minimum~\eqref{eq:opt_BF} exists and for $\alpha_f<\infty$ large
		enough~ it follows that $\mH$ is a control barrier function
		according to Definition~\ref{def:barrier_function} with domain $\DD_{\mathrm{PB}}$ and
		safe set $\SS_{\mathrm{PB}}$.
	\end{theorem}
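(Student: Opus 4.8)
The plan is to verify each clause of Definition~\ref{def:barrier_function} for $\mH$ in turn: solvability of \eqref{eq:opt_BF} and continuity of $\mH$, non-emptiness and compactness of $\SS_{\mathrm{PB}}$ and $\DD_{\mathrm{PB}}$, and finally the decrease inequalities \eqref{eq:def_barrier_function_12}. \emph{Step 1 (solvability and continuity).} For fixed $x$, every slack enters the cost \eqref{eq:opt_BF_cost} monotonically and only through a lower bound, namely $\xi_{i|k}\geq\max(0,c(\mXpred_{i|k})+\Delta_i\mOnes{})$ and $\xi_{N|k}\geq\max(0,h_f(\mXpred_{N|k}))$, so at optimality these hold with equality and the slacks may be eliminated. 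Problem~\eqref{eq:opt_BF} then becomes the minimisation, over the \emph{fixed} compact set $(\mUpred_{0|k},\dots,\mUpred_{N-1|k})\in\UU^N$, of a function that is jointly continuous in $(x,u)$, because $f$ and $c$ are continuous, $h_f$ is continuous on $\RR^n$ by hypothesis, and the predicted states are continuous compositions of $f$. Hence the minimum in \eqref{eq:opt_BF} exists, and by a standard argument (continuous objective, compact and $x$-independent input set) $\mH$ is continuous on all of $\RR^n$, in particular on $\DD_{\mathrm{PB}}$.

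\emph{Step 2 (the sets).} Both $\DD_{\mathrm{PB}}=\{x:\mH(x)\leq\alpha_f\gamma_f\}$ and $\SS_{\mathrm{PB}}=\{x:\mH(x)\leq 0\}$ are sublevel sets of the continuous function $\mH$, hence closed; moreover, the $i=0$ term of \eqref{eq:opt_BF_cost} (with $\Delta_0=0$) gives $\mH(x)\geq\mNormGenSmall{\max(0,c(x))}$, so membership in $\DD_{\mathrm{PB}}$ forces $c(x)\leq\alpha_f\gamma_f\mOnes{}$, i.e.\ $x\in\XX_0(\alpha_f\gamma_f\mOnes{})$, which is compact by assumption; thus $\DD_{\mathrm{PB}}$ is compact and so is $\SS_{\mathrm{PB}}\subseteq\DD_{\mathrm{PB}}$. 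For non-emptiness, take any $x\in\SS_f$: since $\Delta_i$ is strictly increasing, $\SS_f\subseteq\XX_{N-1}(0)\subseteq\XX_i(0)$ for $i=0,\dots,N-1$, and Assumption~\ref{ass:terminal_cbf} supplies inputs in $K_{\mathrm{CBF},f}^{2}$ keeping the trajectory inside $\SS_f$; the resulting prediction has all slacks zero, so $\mH(x)=0$. Hence $\emptyset\neq\SS_f\subseteq\SS_{\mathrm{PB}}\subseteq\DD_{\mathrm{PB}}$, and continuity of $\mH$ on $\DD_{\mathrm{PB}}$ is already known from Step~1.

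\emph{Step 3 (the decrease).} Fix $x\in\DD_{\mathrm{PB}}$ with optimiser $(\mUpredOpt_{i|k},\mXpredOpt_{i|k},\xi^*_{i|k})$ and set $x^+\mDef f(x,\mUpredOpt_{0|k})=\mXpredOpt_{1|k}$. I build a feasible candidate for \eqref{eq:opt_BF} at $x^+$ by a shift: $\mXpredCand_{i|k+1}=\mXpredOpt_{i+1|k}$, $\mUpredCand_{i|k+1}=\mUpredOpt_{i+1|k}$ and $\bar\xi_{i|k+1}=\max(0,\xi^*_{i+1|k}+(\Delta_i-\Delta_{i+1})\mOnes{})$ for $i\leq N-2$, together with $\bar\xi_{N-1|k+1}=\max(0,c(\mXpredOpt_{N|k})+\Delta_{N-1}\mOnes{})$; the tightening $\Delta_i<\Delta_{i+1}$ is exactly what makes these slacks feasible while not increasing in norm, i.e.\ $\mNormGenSmall{\bar\xi_{i|k+1}}\leq\mNormGenSmall{\xi^*_{i+1|k}}$, strictly if $\xi^*_{i+1|k}\neq 0$ (monotonicity of the norm). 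Since $\alpha_f\xi^*_{N|k}\leq\mH(x)\leq\alpha_f\gamma_f$, we get $h_f(\mXpredOpt_{N|k})\leq\xi^*_{N|k}\leq\gamma_f$, hence $\mXpredOpt_{N|k}\in\DD_f$ by the definition of $\gamma_f$. If $\mXpredOpt_{N|k}\in\SS_f$ (equivalently $\xi^*_{N|k}=0$), choose $\mUpredCand_{N-1|k+1}\in K_{\mathrm{CBF},f}^{2}(\mXpredOpt_{N|k})$ so $\bar\xi_{N|k+1}=0$, and $\SS_f\subseteq\XX_{N-1}(0)$ gives $\bar\xi_{N-1|k+1}=0$; then $\mH(x^+)-\mH(x)\leq-\mNormGenSmall{\xi^*_{0|k}}-\sum_{i=1}^{N-1}\big(\mNormGenSmall{\xi^*_{i|k}}-\mNormGenSmall{\bar\xi_{i-1|k+1}}\big)<0$ whenever $\mH(x)>0$. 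If $\mXpredOpt_{N|k}\notin\SS_f$, then $\mXpredOpt_{N|k}\in\DD_f\setminus\SS_f$ and $\xi^*_{N|k}=h_f(\mXpredOpt_{N|k})>0$; picking $\mUpredCand_{N-1|k+1}\in K_{\mathrm{CBF},f}^{1}(\mXpredOpt_{N|k})$ yields $h_f(\mXpredCand_{N|k+1})\leq h_f(\mXpredOpt_{N|k})-\Delta h_f(\mXpredOpt_{N|k})$, so $\alpha_f(\bar\xi_{N|k+1}-\xi^*_{N|k})\leq-\alpha_f\min\!\big(h_f(\mXpredOpt_{N|k}),\Delta h_f(\mXpredOpt_{N|k})\big)<0$, and collecting the running terms gives $\mH(x^+)-\mH(x)\leq\mNormGenSmall{\bar\xi_{N-1|k+1}}-\mNormGenSmall{\xi^*_{0|k}}-\alpha_f\min\!\big(h_f(\mXpredOpt_{N|k}),\Delta h_f(\mXpredOpt_{N|k})\big)$.

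\emph{Step 4 (closing, and the role of $\alpha_f$).} The only surplus is $\mNormGenSmall{\bar\xi_{N-1|k+1}}=\mNormGenSmall{\max(0,c(\mXpredOpt_{N|k})+\Delta_{N-1}\mOnes{})}$ in the last case. Using $\SS_f\subseteq\XX_{N-1}(0)$, compactness of $\DD_f$, and continuity of $c$, $h_f$, $\Delta h_f$, the quotient $\mNormGenSmall{\max(0,c(x)+\Delta_{N-1}\mOnes{})}\,/\,\min(h_f(x),\Delta h_f(x))$ is bounded above on $\DD_f\setminus\SS_f$ (it vanishes near $\SS_f$ and is a ratio of a bounded quantity over one bounded away from zero elsewhere); choosing $\alpha_f$ above that bound — and noting $\mXpredOpt_{N|k}\notin\SS_f$ already implies $\mH(x)\geq\alpha_f\xi^*_{N|k}>0$ — makes $\mH(x^+)-\mH(x)<0$ on $\DD_{\mathrm{PB}}\setminus\SS_{\mathrm{PB}}$ in every case. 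Finally set $\Delta h_{\mathrm{PB}}(x)\mDef\mH(x)-\inf_{u\in\UU}\mH(f(x,u))$; this is continuous on $\DD_{\mathrm{PB}}$ by the same continuity/compactness reasoning as in Step~1, is strictly positive on $\DD_{\mathrm{PB}}\setminus\SS_{\mathrm{PB}}$ since $\inf_{u}\mH(f(x,u))\leq\mH(x^+)<\mH(x)$ there, and for $x\in\SS_{\mathrm{PB}}$ the same shift with all-zero slacks gives $\mH(x^+)=0$, hence $\inf_{u}\mH(f(x,u))\leq 0$. This verifies \eqref{eq:def_barrier_function_12}, so $\mH$ is a control barrier function with domain $\DD_{\mathrm{PB}}$ and safe set $\SS_{\mathrm{PB}}$. \emph{Expected main obstacle:} Step~4 — taming the shift-induced slack $\bar\xi_{N-1|k+1}$ when the optimal terminal state lies in $\DD_f$ but outside $\SS_f$, and converting the per-state decrease into the clause of Definition~\ref{def:barrier_function} with a finite $\alpha_f$ and a genuinely continuous $\Delta h_{\mathrm{PB}}$. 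The shift construction, the separate treatment of the running and terminal parts of the cost, and the compactness bookkeeping in Steps~1–2 are routine once the global continuity of $\mH$ is established.
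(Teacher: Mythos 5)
Your proof is correct and follows essentially the same strategy as the paper's: eliminate the slacks to get a continuous reduced objective over the compact set $\UU^N$ (Weierstrass for existence), bound $\DD_{\mathrm{PB}}$ via the $i=0$ slack term, and obtain the decrease from a shifted candidate in which the constraint tightening $\Delta_i<\Delta_{i+1}$ strictly reduces the running slacks while the terminal control barrier function handles $\xi_{N|k}$, with $\alpha_f$ chosen large enough to dominate the shift-induced slack $\bar\xi_{N-1|k+1}$ on the compact set $\DD_f\setminus\mathrm{int}(\XX_{N-1}(0))$ (your ratio bound is exactly the paper's $\alpha_f\geq\hat c\,\epsilon_f^{-1}$ with the same three-case structure folded into one expression, and it relies on the same implicit reading of $\SS_f\subset\XX_{N-1}(0)$ as containment in the interior). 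Two of your technical choices are genuinely cleaner than the paper's: you get continuity of $\mH$ on all of $\RR^n$ from the standard parametric-minimization (Berge-type) argument with an $x$-independent compact feasible set, where the paper runs a hands-on uniform-continuity/Heine--Cantor argument restricted to $\DD_{\mathrm{PB}}$; and you define the decrease function directly as $\Delta h_{\mathrm{PB}}(x)=\mH(x)-\inf_{u\in\UU}\mH(f(x,u))$, which is automatically continuous and satisfies \eqref{eq:def_barrier_function_1} with equality, whereas the paper assembles case-wise bounds and asserts that a continuous maximum over them exists. These shortcuts tidy up precisely the two least rigorous passages of the published proof without changing its substance.
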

	\begin{proof}
		The existence of the minimum~\eqref{eq:opt_BF} is shown in Lemma~\ref{lem:existence_bf} in Appendix~\ref{app:technical_lemmas}.
		The remaining proof is structured in three different parts according to the
		properties required by Definition~\ref{def:barrier_function} as follows:
		In the first part, we first establish positive
		definiteness of $\mH(x)$ around $\SS_{\mathrm{PB}}$ in $\DD_{\mathrm{PB}}$.
		The second part establishes local
		continuity of $\mH(x)$ in $\DD_{\mathrm{PB}}$ and~compactness of
		$\SS_{\mathrm{PB}}$ and $\DD_{\mathrm{PB}}$.~
		In the last part we use a feasible solution at the current time step
		to construct a sub-optimal candidate solution to~\eqref{eq:opt_BF}
		at the next time step, for which we derive a sufficiently large bound on $\alpha_f$ that
		implies~ forward invariance of $\SS_{\mathrm{PB}}$ and $\DD_{\mathrm{PB}}$~and the existence
		of a continuous decrease $\Delta\mH(x)$ between two time-steps.
		In the following, we denote the optimal sequence
		$\mUpredOpt_{i|k}$, $\mXpredOpt_{i|k}$, and $\xi^*_{i|k}$
		for $x_{0|k}=x$ as $\mUpredOpt_{i}(x)$, $\mXpredOpt_{i}(x)$, and
		$\xi^*_{i}(x)$. We sometimes only refer to $\mUpredOpt_i(x)$ as
		$\mXpredOpt_{i}(x)$ and $\xi^*_{i}(x)$ can be defined correspondingly.
		\par		
		\paragraph{Positive definiteness of $\mH$ around $\SS_{\mathrm{PB}}$ in $\DD_{\mathrm{PB}}$}
		\label{proof:BF_lpd}
		By definition of $\SS_{\mathrm{PB}}$ it follows	directly that $\mH(x) = 0$ for all
		$x\in\SS_{\mathrm{PB}}$. If $x\notin\SS_{\mathrm{PB}}$ then	there must exist a
		$\xi^*_{i}(x)$ for some $0\leq i \leq N$ such that $\Vert \xi^*_{i}(x)\Vert >0$
		and therefore by definition of the cost~\eqref{eq:opt_BF_cost} it follows that
		$\mH(x)>0$ for all $x\in \DD_{\mathrm{PB}} \setminus \SS_{\mathrm{PB}}$.
		\par
		\paragraph{Continuity of $\mH$ for all $x\in\DD_{\mathrm{PB}}$~\eqref{eq:domain_BF}
		and compactness of $\DD_{\mathrm{PB}}$ and $\SS_{\mathrm{PB}}$}
		\label{proof:BF_cont}
		We show that for every $\epsilon >0$ there exists a $\delta >0$ such that for
		any $x,\bar x\in\DD_{\mathrm{PB}}$ the condition $\mNormGen{x-\bar x}<\delta$ implies that
		$|\mH(x)-\mH(\bar x)|<\epsilon$. As an intermediate step towards showing continuity,
		we derive a suboptimal solution to~\eqref{eq:opt_BF} at state $\bar x$ based
		on an optimal input sequence $\mUpredOpt_{i}(x)$ at state $x$.
		Therefore, define the constant input sequence $\mUpredCand_{i}(\bar x) \mDef \mUpredOpt_{i}(x)$ for a given $x$
		and define the corresponding state predictions $\mXpredCand_{i}(\bar x)$ based
		on an initial state $\bar x_{0}=\bar x$ according to the
		dynamics~\eqref{eq:system} with corresponding slacks $\bar \xi_{i}(\bar x) \mDef \max(0,c(\mXpredCand_{i}(\bar x))+\Delta_i\mOnes{})$ and $\bar\xi_{N}(\bar x)=\max(0,h_f(\mXpredCand_{N}(\bar x)))$,
		satisfying~\eqref{eq:opt_BF_init}-\eqref{eq:opt_BF_terminal} for $\bar x_{0}=\bar x$
		by construction.
		Notice that this candidate solution will be contained in a compact set
		for all $\bar x\in\DD_{\mathrm{PB}}$ and $\{\bar u_{i}\}_{i=0}^{N-1}\in\UU^{N-1}$ due to
		Lemma~\ref{lem:compact_bounding_set}, which implies uniform continuity
		of the dynamics $f$ and all constraint functions $c_j$, and $h_f$ in the following analysis
		according to the Heine-Cantor theorem \cite[Theorem 4.19]{rudin1964principles}.
		\par
		Since compositions of uniformly continuous functions yield a uniformly continuous function
		we have that any predicted state
		$\mXpredCand_{i}(\bar x)=f(f(..f(f(\bar x,\mUpredCand_{0}(x)),\mUpredCand_{1}(x))...),\mUpredCand_{N-1}(x))$
		is uniformly continuous in the initial condition $\bar x$. Next, we note that the objective~\eqref{eq:opt_BF_cost}
		corresponding to the constructed sub-optimal solution denoted by $\bar {h}_{\mathrm{PB}}(\bar x)$ is uniformly continuous in $\bar x$ due to
		uniform continuity of $c_j$, $f$, $h_f$, and the fact that the maximum and sum of uniformly continuous functions
		are uniformly continuous. Since the optimal solution is smaller or equal than the candidate solution, it holds that
		for every $\epsilon >0$ there exists a uniform $\delta>0$ such that $||x-\bar x||<\delta$ implies
		$\mH(\bar x)-\mH(x)\leq\bar {h}_{\mathrm{PB}}(\bar x)-\mH(x)<\epsilon$. Due to uniform continuity, it also follows for
		$||\tilde x - \tilde {\bar x}||<\delta$ with optimal solution $u^*_i(\tilde x)$ at
		$\tilde x\mDef \bar x$ and corresponding suboptimal solution at $\tilde{\bar x}\mDef x$ that
		$\mH(\tilde{\bar x})-\mH(\tilde x)<\epsilon$, which shows continuity of~\eqref{eq:opt_BF} in $x\in\DD_{\mathrm{PB}}$.
		\par
		To show compactness of $\DD_{\mathrm{PB}}$ and $\SS_{\mathrm{PB}}$, we notice that
		Lemma~\ref{lem:bounded_solution} implies boundedness of these sets. 
		Since $\mH(x)$ is non-negative and continuous on $\DD_{\mathrm{PB}}$ and since
		$\SS_{\mathrm{PB}}\subseteq \DD_{\mathrm{PB}}$,
		we can further conclude that the sets $\DD_{\mathrm{PB}}$ and $\SS_{\mathrm{PB}}$ are closed, since they are defined as pre-images of the closed sets $[0,\alpha_f\gamma_f]$ and $\{0\}$ with respect to the mapping $\mH(x)$, implying compactness.
		\par
		\paragraph{Forward invariance of $\DD_{\mathrm{PB}}$ and $\SS_{\mathrm{PB}}$ and decrease $\Delta h_{\mathrm{PB}}$ around $\SS_{\mathrm{PB}}$ in $\DD_{\mathrm{PB}}$}
		\label{proof:BF_decrease}
		We show~\eqref{eq:def_barrier_function_12} by constructing
		a potentially sub-optimal $\hat u(x)$ for every $x\in\DD_{\mathrm{PB}}$
		that implies the existence of a locally positive definite function $\Delta \mH(x)$ around
		$\SS_f$ in $\DD_{\mathrm{PB}}$. Note that for every
		$x\in\DD_{\mathrm{PB}}$ there exists an optimal solution $\mUpredOpt_i(x)$
		$i=0,..,N-1$ to~\eqref{eq:opt_BF} with corresponding state sequence
		$\mXpredOpt_i(x)$ and slack sequence $\xi_i^*(x)$. In the remainder
		of the proof we will establish that $\hat u(x)=\mUpredOpt_0(x)$
		yields the desired implications.
		As a first step, consider~\eqref{eq:def_barrier_function_2}. For all
		$x\in\SS_{\mathrm{PB}}$ it follows from
		\eqref{eq:opt_BF_cost} that $\mXpredOpt_{N|k}\in\SS_f$
		and we construct
		\begin{align*}
			&u^+_i(x)\in
			\begin{cases}
				\{\mUpredOpt_{i+1}(x)\},~i=0,..,N-2,\\
				K_{\mathrm{CBF},f}(\mXpredOpt_N(x)),~i=N-1,
			\end{cases} \\
			&x^+_i(x)=
			\begin{cases}
				\mXpredOpt_{i+1}(x),~i=0,..,N-1, \\
				f(\mXpredOpt_N(x), u^+_{N-1}(x)), i = N,\text{ and}
			\end{cases}\\
			&\xi^+_i(x)=0\cdot\mOnes{}~~\forall i=0,..,N.
		\end{align*}
		In the following we show feasibility of $u_i^+,x_i^+$, and $\xi_i^+$
		with respect to \eqref{eq:opt_BF} at state $f(x, \hat u(x))=\mXpredOpt_1(x)=x_0^+(x)$:
		\begin{enumerate}
			\item $u_{N-1}^+(x) \in \UU$ since
				$u_{N-1}^+(x)\in K_{\mathrm{CBF},f}(\mXpredOpt_{N}(x))$ with
				$h_f(\mXpredOpt_{N}(x))\leq 0$
				and $u_{i}^+(x)\in\UU$ for all $i=0,..,N-2$ since $\mUpredOpt_{i+1}(x)$
				is part of a feasible solution.
			\item $\max(c(x^+_i(x)))=\max(c(\mXpredOpt_{i+1}(x)))\leq -\Delta_{i+1}\mOnes{}$
				for all $i=0,..,N-2$ by construction of $x^+_i$ and since $\Delta_{i+1}>\Delta_i$
				it follows $\max(c(x^+_i(x)))\leq -\Delta_{i}\mOnes{}$
				and therefore $x^+_i(x)\in\XX_i(0)$ for all $i=0,..,N-2$.
			\item $h_f(x_{N-1}^+)\leq 0$ implies $x_{N-1}^+\in\SS_f\subseteq\XX_{N-1}(0)$
				by Assumption~\ref{ass:terminal_cbf} through Theorem~\ref{thm:cbf}, \ref{item:thm_cbf_1}).
			\item $h_f(x_{N-1}^+)\leq 0$ and
					$u^+_{N-1}\in K_{\mathrm{CBF},f}(x^+_{N-1})$
				implies $x_{N}^+\in\SS_f$ and therefore
				$\xi_{N}^+=\max(0,h_f(x_{N}^+))= 0$.
		\end{enumerate}
		We have constructed a feasible solution with optimal
		value 0 for the 1-step prediction, proving~\eqref{eq:def_barrier_function_2} and forward invariance of
		$\SS_{\mathrm{PB}}$ under $u(k)=\mUpredOpt_{0|k}$.
		In the second step we show~\eqref{eq:def_barrier_function_1}, i.e.
		we consider the case $x\in\DD_{\mathrm{PB}}\setminus\SS_{\mathrm{PB}}$, implying by~\eqref{eq:opt_BF_cost} and the definition of $\gamma$ that $\mXpredOpt_{N|k}\in\DD_f$.~
		Let $x_i^+(x)$ and $u_i^+(x)$ be defined as above, where we omit the
		dependency on $x$ in the following. In addition, let
		\begin{align*}
			\xi^+_i =
			\begin{cases}
				\max(0,\xi^*_{i+1} + (\Delta_i - \Delta_{i+1})\mOnes{}),~ i=0,..,N-2,\\
				\max(0, c(x^+_{N-1})),~~i=N-1,\\
				\max(0, h_f(x^+_{N})),~~i=N.
			\end{cases}
		\end{align*}
		Note that feasibility of $\xi_{N-1}^+$ and $\xi_{N}^+$ w.r.t.~\eqref{eq:opt_BF}
		is given by definition. For $i=0,..,N-2$ we have
		\begin{align*}
			\max(0,c(x^+_i)) + \Delta_{i+1}\mOnes{} &\leq \xi_{i+1}^*\\
			\Leftrightarrow \max(0,c(x^+_i)) + \Delta_{i}\mOnes{} &\leq  \xi_{i+1}^*- \Delta_{i+1}\mOnes{} + \Delta_{i}\mOnes{} \\
			\Rightarrow \max(0,c(x^+_i)) + \Delta_{i}\mOnes{}&
			\leq
			\underbrace{\max(0,\xi_{i+1}^*- \Delta_{i+1}\mOnes{} + \Delta_{i}\mOnes{})}_{=\xi_i^+}
		\end{align*}
		ensuring feasibility of $\xi^+_{i}$ as well. After establishing
		feasibility of the candidate sequence, we use $\xi^+_i$ to bound the
		cost decrease
		\begin{align*}
			\Delta\mH(x)\mDef &\mH(x_0^+)-\mH(x)\\
			&\leq\underbrace{\alpha_f\left(\xi^+_N-\xi^*_N\right)+\Vert\xi^+_{N-1}\Vert}_{=:H_1} \\
			&\quad\underbrace{-\Vert \xi^*_0\Vert +\sum_{i=0}^{N-2}\Vert\xi^+_i\Vert-\Vert\xi^*_{i+1}\Vert}_{=:H_2}
		\end{align*}
		by first noting that $H_2\leq 0$ follows directly by construction since
		$\Delta_i < \Delta_{i+1}$. For $H_1$ we distinguish three possible cases:
		\begin{enumerate}
			\item $x^+_{N-1}=x^*_N\notin \XX_{N-1}(0)$, implying $x^+_{N-1}\notin \SS_f$.
				Let $\hat c = \max_{x\in\DD_f}\Vert \max(0,c(x)) \Vert$ be the maximum
				attainable norm of positive values of the constraint functions~\eqref{eq:state_constraints}
				in the terminal domain that is finite due to continuity of the constraints $c$ and boundedness
				of $\DD_{f}$. In this case we have
				$\xi^+_N=\max(0,h_f(x^+_N))$ and $\xi^*_N=h_f(\mXpredOpt_{N})$ and therefore
				\begin{align*}
					 &\alpha_f\left(\xi^+_N-\xi^*_N\right)+\Vert\xi^+_{N-1}\Vert
					 \leq-\alpha_f\Delta \xi_f(\mXpredOpt_{N}) + \hat c
				\end{align*}
				with the continuous (compare with \eqref{eq:lyapunov_decrease}) decrease defined as
				\begin{align}
					\Delta \xi_f(\mXpredOpt_{N})
					=\max(0,\min(\Delta h_f(\mXpredOpt_{N}),\max(0,h_f(\mXpredOpt_{N})))).
					\label{eq:pcbf_decrease}
				\end{align}
				with the continuous terminal decrease bound $\Delta h_f(x)$ from
				Assumption~\ref{ass:terminal_cbf}.
				Since $\DD_f$ and $\XX_{N-1}(0)$ are compact
				sets it follows for $x^*_N\notin \XX_{N-1}(0)$ and
				$x^*_N\in \DD_f$  that the smallest possible
				value $\Delta \xi_f(x^*_N)$ is lower bounded by $\min_{x\in \DD_f \setminus \mathrm{int}(\XX_{N-1}(0))} \Delta \xi_f(x)=\epsilon_f$. Due to $\SS_f\subset\XX_{N-1}(0)$,
				$x^*_N\notin\XX_{N-1}(0)\Rightarrow x^*_N\notin\SS_f$,
				and since \eqref{eq:pcbf_decrease} is larger than zero for
				$x\notin \SS_f$, it holds that $\epsilon_f>0$. This implies for all $\mXpredOpt_{N}\notin \XX_{N-1}(0)$ that
				$\Delta \xi_f(\mXpredOpt_{N})>\epsilon_f$ holds. Selecting
				$\alpha_f\geq\hat c \epsilon_f^{-1}$ therefore implies
				that the term $H_1$ is strictly less than zero  in this case.
			\item $x^+_{N-1}= \mXpredOpt_{N} \in \XX_{N-1}(0)$ and $x^+_{N-1}\notin \SS_f$. In this
				case it follows directly that $\xi_{N-1}^+=0$ and
				$\alpha_f\left(\xi^+_N-\xi^*_N\right)\leq -\alpha_f\Delta \xi_f(\mXpredOpt_{N})$
				and therefore $H_1<0$.
			\item $x^+_{N-1} = \mXpredOpt_{N}\in \XX_{N-1}(0)$ and
				$x^+_{N-1}\in \SS_f$. In this case we have
				$H_1=0$. However, note that $x\in\DD_{\mathrm{PB}}\setminus\SS_{\mathrm{PB}}$
				implies that there must exist a $j$, $0\leq j \leq N-1$ such that
				$\Vert\xi_j^*\Vert > 0$ and by definition of $\xi^+_i$ that
				$\Vert\xi_{j-1}^+\Vert - \Vert\xi_{j}^*\Vert < 0$ for $1\leq j\leq N-1$
				since $\Delta_{i+1} > \Delta_i$ and therefore $H_2<0$.
		\end{enumerate}
		We have therefore shown that for any possible state or terminal
		constraint softening given $x^+_{N-1}=x_{N}^*$, either $H_1<0$
		or $H_2<0$ holds if $x\in\DD_{\mathrm{PB}}\setminus \SS_{\mathrm{PB}}$.
		Furthermore, similar as done in the first part of the proof, it
		can be shown that all introduced bounds on $\Delta h_{\mathrm{PB}}$ are
		continuous due to continuity of the dynamics $f$ and $\Delta h_f$,
		implying that there exists a continuous maximum $\Delta \mH$
		over these cases as required by Definition~\ref{def:barrier_function}~\eqref{eq:def_barrier_function_1}.
		From the definition of $\DD_{\mathrm{PB}}$ in~\eqref{eq:domain_BF}, the upper bound $\Delta h_{PB}$ implies forward invariance of $\DD_{\mathrm{PB}}$ under $u(k)=\mUpredOpt_{0|k}$.
		Together with continuity of $\mH$ from the first part of the proof,
		we conclude that $\mH$ is a control barrier function according to
		Definition~\ref{def:barrier_function}.
	\end{proof}
	The proof of Theorem~\ref{thm:BF} not only establishes that $\mH(x)$
	is a control barrier function according to Definition~\ref{def:barrier_function}
	but also shows that any input computed according to Algorithm~1, line 4 will be
	safe, i.e. that it satisfies $\mUpredOpt_{0|k}(x(k))\in K_{\mathrm{CBF}}(x(k))$,
	as defined in \eqref{eq:def_barrier_function_3}. This is ensured through the
	existence of a sub-optimal slack sequence, implying the required decrease $\Delta\mH(x(k))$
	between consecutive time steps for any input sequence $\mUpredOpt_{i|k}(x(k))$
	satisfying the constraints $\XX_i(\xi_i^*(x(k)))$ with $\xi^*(x(k))$ resulting
	from~\eqref{eq:opt_BF}.
	In the case of infeasible initial conditions, e.g. during system startup or due to
	large disturbances, Algorithm~1  ensures that
	potential constraint relaxations $\xi_i >0$ are stabilized and asymptotically converge to zero if no additional disturbances occur.
	As a result, a tightened version of the feasible set of the
	safety filter $\SS_{\mathrm{PB}}$~\eqref{eq:safe_set_BF} is stabilized and
	Algorithm~1 thereby recovers the system from constraint violations
	if $x(0)\in\DD_{\mathrm{PB}}$. We illustrate this effect for numerical examples in
	Section~\ref{sec:numerical_examples}.
	\par
	The proposed recovery mechanism for a PSF offers a number of 
	additional benefits. The predictive control barrier function can be
	used as a safety metric when combining safety filters with potentially unsafe learning-based
	controllers $u_p(k)$ through $\mH(f(x(k),u_p(k)))$, see, e.g., \cite{Fisac2019}, to accelerate the overall learning performance.
	Note that existing predictive safety filter schemes such as \cite{Wabersich2018} 
	only allow to penalize safety ensuring interventions $\Vert u_s(k) - u_p(k)\Vert$ at
	each time step, which is typically non-continuous w.r.t. the state $x(k)$ and the
	learning input $u_p(k)$	and typically does not relate to the actual `danger'
	that $u_p(k)$ might induce in terms of state constraint violations in the future.
	\par
	In addition, continuity of $\mH(x)$ enables the computation of explicit
	approximations $\hat{h}_{\mathrm{PB}}(x)$ of $\mH(x)$ within $\DD_{\mathrm{PB}}$ using, e.g.,
	deep learning techniques, which can significantly reduce the online computational burden by replacing
	Algorithm~1 line 3,4 with $u_{0|k} \in \argmin_{u\in\RR^n}
	\mNormGenSmall{u_p(x(k))-u} \text{ s.t. } \hat{h}_{\mathrm{PB}}(f(x(k),u)) - \hat{h}_{\mathrm{PB}}(x(k))\leq -\Delta \hat h(x(k))$, where
	$\Delta \hat h(x(k)) = 0$ for $\hat{h}_{\mathrm{PB}}(x(k))=0$ and
	$\Delta \hat h(x(k)) < 0$ otherwise.
	\begin{remark}\label{rem:mpc}
		Note that the proposed concept in Algorithm~1
		together with the advantages from Theorem~\ref{thm:BF}
		can also be used to enhance nominal model predictive controllers
		by replacing the objective in the predictive safety filter
		problem $\mNormGenSmall{u_p(x(k)) - \mUpred_{0|k}}$ with an appropriate
		sum of stage-cost functions, e.g. $\sum_{i=0}^{N-1}\ell(x_{i|k},u_{i|k})$.
	\end{remark}
\section{Terminal control barrier function design}\label{sec:terminal_cbf_design}
In this section, we provide a principled design procedure for a terminal
control barrier function according to Assumption~\ref{ass:terminal_cbf}.
To this end, we assume that the dynamics~\eqref{eq:system} are twice continuously differentiable
and that there exists a steady-state at the origin $0 = f(0,0)$ with $0\in \mathrm{int}(\XX)$
and $0\in\mathrm{int}(\UU)$, which allows to locally approximate~\eqref{eq:system} using
$x(k+1)=A x(k+1)+B u(k) + r(x(k),u(k))$ with linearized dynamics
$A\mDef (\partial/\partial x) f(x,u)|_{(0,0)}$, $B\mDef (\partial/\partial u) f(x,u)|_{(0,0)}$,
and higher-order error terms $r(x(k), u(k))= f(x,u) - Ax-Bu$. We additionally assume that the state and
input constraints are convex polytopes of the form $\XX=\{x\in\RR^n | A_x x \leq b_x\}$,
$A_x\in\RR^{n_x\times n}$, $b_x\in\RR^{n_x}$ and $\UU=\{u\in\RR^m | A_u u \leq b_u\}$,
$A_u\in\RR^{n_u\times n}$, $b_u\in\RR^{n_u}$ and that a corresponding constraint tightening according
to the definition~\eqref{eq:def_tight_state_constraints} has been selected, e.g. using $\Delta_i = i\cdot(1/N)c_\Delta$
for some design parameter $c_\Delta \in (0,1)$.
\par
Following similar design steps as presented, e.g., in~\cite{chen1998quasiInfiniteHorizonMPC,rawlings2009model},
we restrict our attention to a quadratic terminal control barrier function of the form
$h_f(x)=x^\top P x -\gamma_x$, $P\in\mSetPosSymMat{n}$, $\gamma_x>0$, with safe set
$\SS_f = \{x\in\RR^n | h_f(x)\leq 0\}$, domain $\DD_f=\{x\in\RR^n | h_f(x)\leq \gamma_f\}$,
and quadratic decrease $-\Delta h_f(x)=-\mu_x x^\top x - \mu_u u^\top u$ to satisfy
Definition~\ref{def:barrier_function} in the case $r(x,u)=0$, which we adjust
to the nonlinear case $r(x,u)\neq 0$ afterwards.
To this end, we explicitly parametrize a linear control law of the form $u = Kx$, with
$K\in\RR^{m\times n}$, enabling application of convex optimization techniques for designing
$P$ with the goal of obtaining a possibly large domain $\DD_f$ and safe set $\SS_f$. Thereby, specific
values for $\mu_x>0$ and $\mu_u>0$ can be selected to trade off a compensation of the linearization
error $r(x,u)$ against the resulting aggressiveness of the state feedback controller.
This results in the following design:
\paragraph*{Step 1} Select $\mu_x,\mu_u>0$, define $P=E^{-1}$, $K=YE^{-1}$ (see, e.g., \cite{Boyd1994})
and solve
\begin{subequations}\label{eq:opt_design_sdp}
\begin{align}
	\min_{\substack{E\in\mSetPosSymMat{n},\\{Y\in\RR^{m\times n}}}} \quad & -\mathrm{logdet}(E) \label{eq:opt_design_sdp_obj}\\
	\text{s.t.}\quad & E \succeq 0, \\
	&\begin{bmatrix}
		E & EA^\top+Y^\top B^\top & E I_n \mu_x & Y^\top I_m \mu_u \\
		* & E & 0 & 0 \\
		* & * & I_n & 0 \\
		* & * & * & I_m
	\end{bmatrix} \succeq 0 \label{eq:opt_design_sdp_inv}\\
	& \begin{bmatrix}
		b_{u,j}^2 & A_{u,j}Y \\
		* & E
	\end{bmatrix}\succeq 0, ~~\forall j=1,..,n_u
	\label{eq:opt_design_sdp_cons}
\end{align}
\end{subequations}
with $*$ defining transposed elements. The objective~\eqref{eq:opt_design_sdp_obj}
maximizes the volume of the domain $\DD_f$ and the safe set $\SS_f$,
\eqref{eq:opt_design_sdp_inv} enforces the desired decrease $-\Delta h_f$, and
\eqref{eq:opt_design_sdp_cons} ensures input constraint satisfaction
under $u=Kx$ for all $x\in\DD_f$ with $\gamma_f=1-\gamma_x$.
The terminal safe set $\SS_f$ is obtained through the support values of the state
constraint half-spaces as $\gamma_x = \min(1-c,\min_j~(b_{x,j}-\Delta_{N-1})^2(A_{x,j} P^{-1}A_{x,j}^\top)^{-1})$
for some small $c>0$ to ensure $\SS_f\subset\DD_f\subseteq\XX$ and constraint tightening
$\Delta_{N-1}$. The resulting control
barrier function $h_f(x)=x^\top P x - \gamma_x$ therefore satisfies the required
properties according to Definition~\ref{def:barrier_function} for the linearized system, i.e.
for the case $r(x,u)=0$ by construction.
\paragraph*{Step 2} As shown in \cite[Section 2.5.5]{rawlings2009model}, feasibility
of~\eqref{eq:opt_design_sdp} guarantees that there exist valid, non-empty sub-domains of
$\DD_f$ and $\SS_f$ for the nonlinear system~\eqref{eq:system}. These sub-domains can be found by 
solving a verification problem for a specific choice of $\gamma_f$ and $\gamma_x$,
which can be iteratively reduced, if necessary.
Invariance of $\SS_f$ can be verified via
\begin{subequations}\label{eq:design_nonlinear_verification_step3}
\begin{align}
	0\geq \max_{x\in\RR^n} \quad & h_f(f(x,Kx)) \\
	\text{s.t.} \quad & h_f(x)\leq 0
\end{align}
\end{subequations}
using nonlinear programming. If this condition does not hold,
decrease $\gamma_x$ and repeat \emph{Step 2}.
\paragraph*{Step 3} 
To determine $\DD_f$, initialize $\gamma_f = 1-\gamma_x$ and verify
\begin{subequations}\label{eq:design_nonlinear_verification}
\begin{align}
	0>\max_{x\in\RR^n} \quad & h_f(f(x,Kx)) - h_f(x) \label{eq:design_nonlinear_verification1}\\
	\text{s.t.} \quad & 0 < h_f(x)\leq \gamma_f. \label{eq:design_nonlinear_verification2}
\end{align}
\end{subequations}
If~\eqref{eq:design_nonlinear_verification1} does not hold then
decrease $\gamma_f\in[\gamma_x,1]$ and repeat \emph{Step 2}.

Note that \emph{Step 1} and \emph{Step 2} require a global solution
of the potentially non-convex optimization problems \eqref{eq:design_nonlinear_verification_step3}
and \eqref{eq:design_nonlinear_verification}. In non-convex cases, a practical strategy is
to randomly select different warm-starts for the underlying optimization algorithm to
obtain a local optimum, which reflects the global optimum with high probability,
see, e.g., \cite{Boender1995}.
The resulting function $h_f$ allows for setting up the predictive
barrier function problem~\eqref{eq:opt_BF} with
terminal penalty scaling factor $\alpha_f=\hat c\epsilon_f^{-1}$, $\hat c = \max_{x\in\DD_f}\Vert \max(0,c(x)) \Vert$,
$\epsilon_f=\min_{x\in \DD_f \setminus \mathrm{int}(\XX_{N-1}(0))}\Delta \xi_f(x)$ according to
the proof of Theorem~\ref{thm:BF}, \eqref{eq:pcbf_decrease}.
If $\DD_f \setminus \mathrm{int}(\XX_{N-1}(0))\neq\emptyset$, then the first
term $H_1$ in the proof of Theorem~\ref{thm:BF} vanishes,
and we can select an arbitrary $\alpha_f>0$, possibly large as
discussed in the paragraph before Theorem~\ref{thm:BF}.

\section{Numerical examples}\label{sec:numerical_examples}
In this section, we first illustrate the proposed predictive barrier function approach using a
small-scale linear system, where we recover a stabilizing linear controller from initial
state constraint violations. In addition, we consider a nonlinear vehicle
model, controlled by an MPC, which encounters unusual large disturbances leading to infeasibility,
which will be recovered through the proposed method. For set computations we use
YALMIP~\cite{Lofberg2004} and the MPT toolbox~\cite{mpt3} and for implementation
of the predictive control barrier and model predictive control problem, we used the
Casadi~\cite{Andersson2018} framework for automatic differentiation together
with the nonlinear optimization solver IPOPT~\cite{biegler2009large}.

\subsection{Illustrative unstable linear example}\label{subsec:linear_example}
\begin{figure*}
	\includegraphics[width=0.45\linewidth]{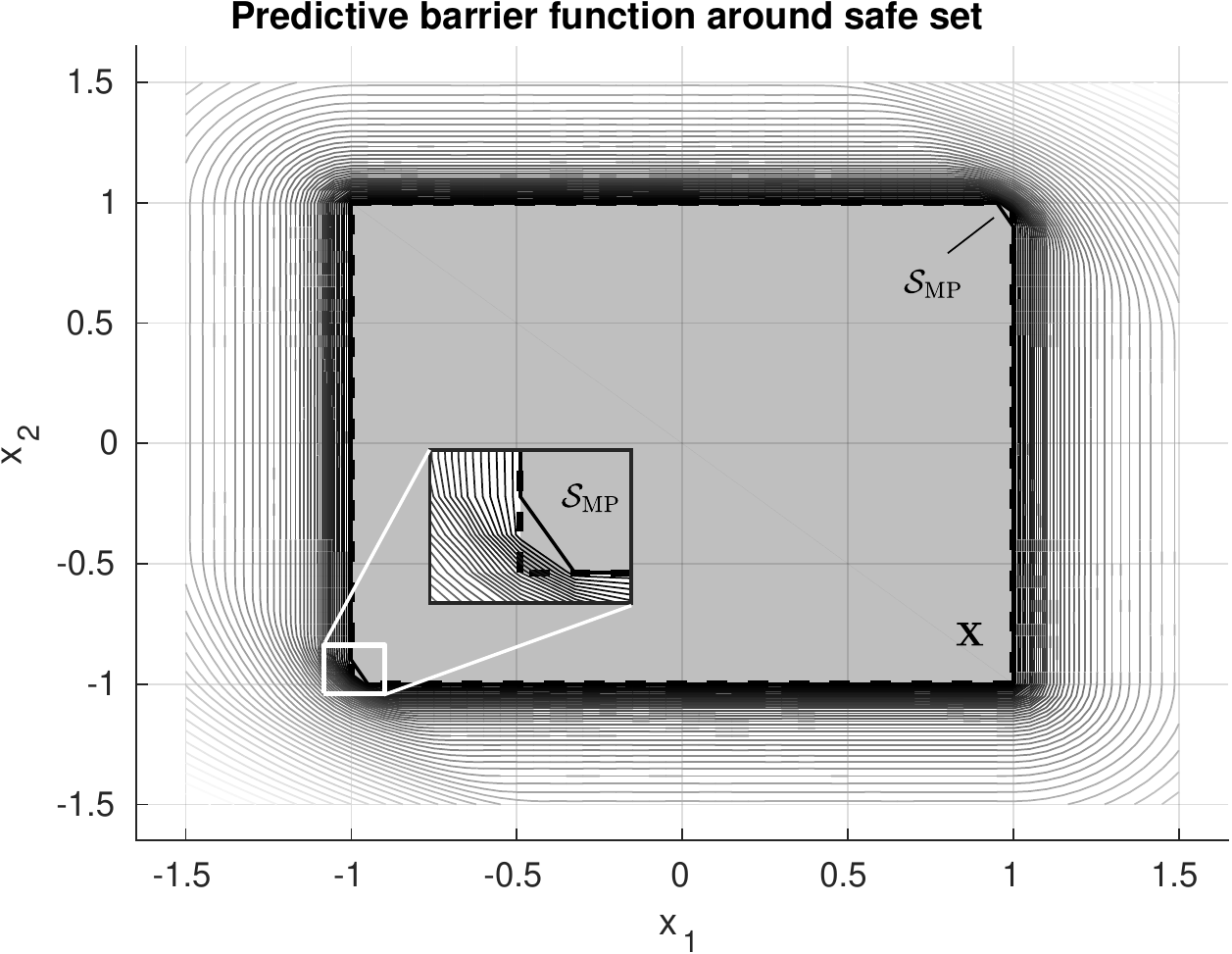}
	\hfill
	\includegraphics[width=0.45\linewidth]{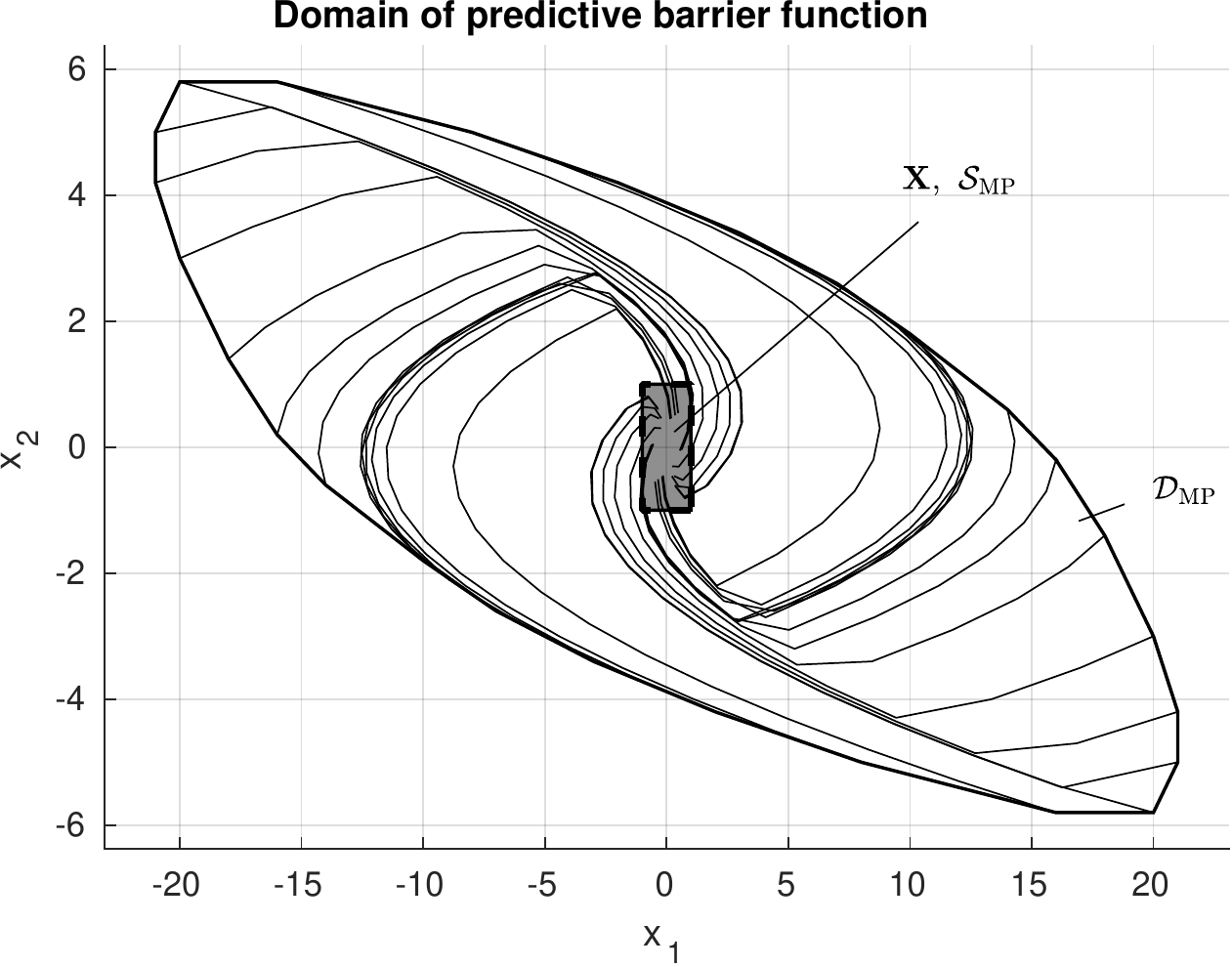} \\
	\includegraphics[width=0.45\linewidth]{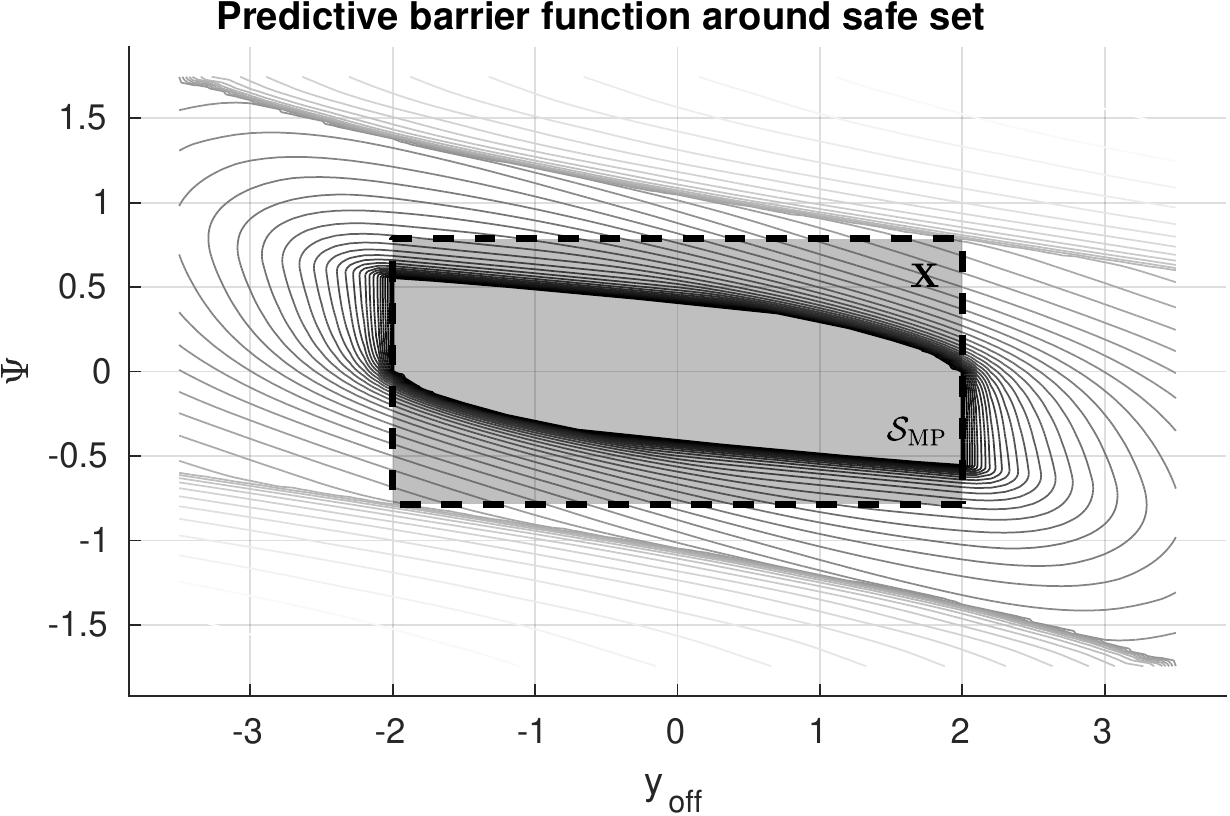}
	\hfill
	\includegraphics[width=0.45\linewidth]{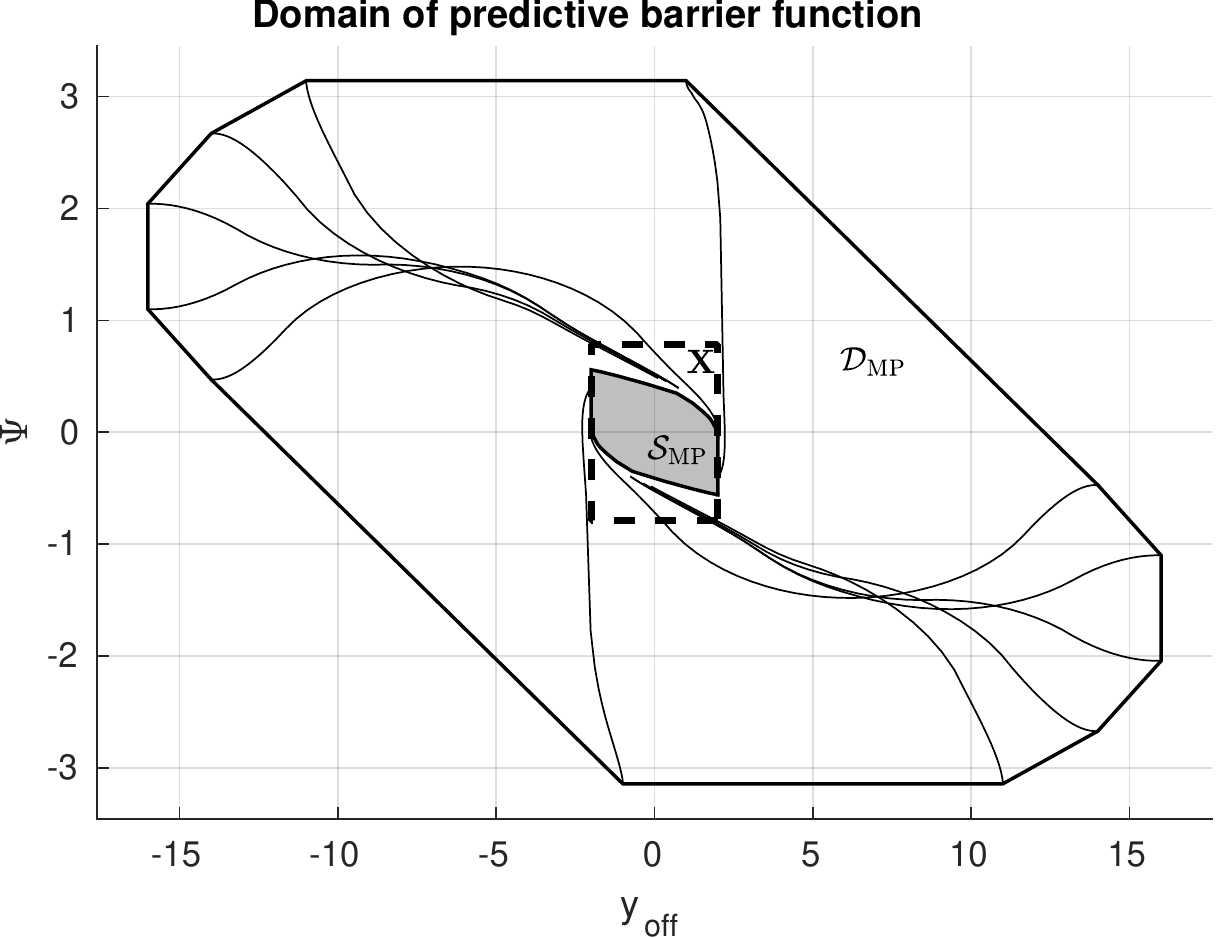}
	\caption{Illustrative example~\ref{subsec:linear_example} (\textbf{top row})
	and the nonlinear vehicle example according to Section~\ref{subsec:nonlinear_example}
	by projecting onto the states $y_{\mathrm{off}}$ and $\Phi$ (\textbf{bottom row}).
	\textbf{Left column:} Shown is the state space $\XX$ together with predictive safe
	set $\SS_{\mathrm{PB}}$ as defined in Theorem~\ref{thm:BF}. Gray contour
	lines depict level sets of the predictive barrier function $\mH(x)$ around
	the safe set $\SS_{\mathrm{PB}}$. \textbf{Right column:} The outer set $\DD_{\mathrm{PB}}$
	represents the domain of $\mH(x)$ as defined in~\eqref{eq:domain_BF} with simulated
	closed-loop recovery trajectories towards the safe set $\SS_{\mathrm{PB}}\subseteq\XX$
	under application of $u(k)=\mUpredOpt_{0|k}(x(k))$.}
	\label{fig:numerical_examples_domain}
\end{figure*}
Consider the unstable linear system~\cite{Zeilinger2014}
\begin{align}\label{eq:unstable_linear_system}
	x(k+1) =
	\begin{bmatrix}
		1.05 & 1 \\
		0 & 1
	\end{bmatrix}
	x(k) + 
	\begin{bmatrix}
		1 \\ 0.5
	\end{bmatrix}
	u(k),
\end{align}
which is subject to state and control constraints
$\Vert x \Vert_\infty\leq 1$ and $\Vert u \Vert_\infty \leq 1$.
We are given a simple stabilizing performance controller of the form $u(k)=K_p x(k)$
with $K_p=[-0.5592,-0.9445]$ and address recovery of the system from initial state
constraint violations $x\notin\XX$, which can additionally cause infeasible control
inputs $u=K_px\notin\UU$.
\par
The constraint tightening is selected as $\Delta_i = i\cdot 0.005$~along prediction
time steps $i=0,..,19$. To design the corresponding terminal control
barrier function according to Assumption~\ref{ass:terminal_cbf} and
Theorem~\ref{thm:BF}, we apply the procedure described in
Section~\ref{subsec:terminal_configuration} with $\mu_x=\mu_u=0.01$
to obtain $P=\left[\begin{smallmatrix} 0.007 & 0.015 \\ 0.015 & 0.086\end{smallmatrix}\right]$,
$K=-[0.062,~0.288]$ with corresponding level sets $\gamma_f=1$ and $\gamma_x=0.0035$.
The resulting terminal control barrier function is therefore given
by $h_f = \max(0, x^\top P x -\gamma_{x})$
with domain $\DD_f=\{x\in\RR^n: x^\top P x \leq 1\}$. The predictive
control barrier function problem~\eqref{eq:opt_BF} is implemented
using a planning horizon $N = 20$ with terminal weight $\alpha_f = 1000$,
satisfying the bound in the proof of Theorem~\ref{thm:BF},
yielding the domain $\DD_{\mathrm{PB}} = \{x\in\RR^n:\mH(x) < 1000\}$ according
to~\eqref{eq:domain_BF}. 

In Figure~\ref{fig:numerical_examples_domain} (top left),
we show the state constraints together with the resulting
predictive safe set $\SS_{\mathrm{PB}}=\{ x\in\RR^n: \mH(x) = 0\}$ according to
Theorem~\ref{thm:BF}, i.e. the zero level set of the
predictive barrier function $\mH(x)$~\eqref{eq:opt_BF}
together with logarithmically scaled contour lines of $\mH(x)$
around $\SS_{\mathrm{PB}}$. Despite the constraint
tightening, the safe set $\SS_{\mathrm{PB}}$ covers almost
the entire state space, providing a possibly small
amount of interference with respect to the linear control
law inside the state constraints $\XX$.

We demonstrate recovery from infeasible initial
conditions in Figure~\ref{fig:numerical_examples_domain} (top right),
where we display the enlarged domain $\DD_{\mathrm{PB}}$ of $\mH(x)$ as defined
in~\eqref{eq:domain_BF} together with the safe set
of the predictive control barrier function
$\SS_{\mathrm{PB}}$ and closed-loop system trajectories under application
of $u(k)=\mUpredOpt_{0|k}(x(k))$, starting at the
boundary of $\DD_{\mathrm{PB}}$. As guaranteed by
Theorem~\ref{thm:BF}, all trajectories converge to
$\SS_{\mathrm{PB}}$.

While the presented predictive control barrier function
method has some fundamental differences compared with
the soft constrained approach presented in~\cite{Zeilinger2014} that is limited
to linear systems and provides stability with respect to the
origin only, the resulting enlarged region of attraction
$\DD_{\mathrm{PB}}$ has a similar shape.

\subsection{Nonlinear example: Kinematic car model}\label{subsec:nonlinear_example}
In this example, the goal is to recover an unsafe state of a kinematic
vehicle model from large initial deviations with respect to lateral
safety constraints. For the car simulation we consider the dynamics
\begin{align}
	\dot y_{\mathrm{off}} &= (v_s + v) \sin(\Psi) \\
	\dot \Psi &= ((v_s + v)/L)\tan(\delta) \\
	\dot \delta &= u_1 \\
	\dot v & = u_2,
\end{align}
where $y_{\mathrm{off}}$ and $\Psi$ define the offset and relative angle
with respect to a centerline, $\delta$ the steering angle,
$v$ the relative longitudinal vehicle speed with respect to a
target velocity $v_s= 5~\mathrm{[m/s]}$, $u_1$ the applied steering rate,
and $u_2$ the applied acceleration. The physical input limitations
are given by $|u_1|\leq 1.4$ and $-5\leq u_2 \leq 2$ as well as
$|\delta| \leq 0.35$. The desired safety constraints are defined
as $|y_{\mathrm{off}}|\leq 2$, $|\Psi|\leq \pi/4$, and $-4\leq v\leq 5$. Similar
to the linear example, we consider recovery of an infeasible
initial condition $x\notin \XX$.

For the design of the required terminal barrier function $h_f$
according to Assumption~\ref{ass:terminal_cbf} and Theorem~\ref{thm:BF},
we proceed as described in Section~\ref{sec:terminal_cbf_design}
by first discretizing the system using Euler forward and a sampling interval of $0.05$ [s].
We select the constraint tightening $\Delta_i = i\cdot 0.004$ and linearize
around the origin to compute for $\mu_x=\mu_u=0.1$ the matrix
\begin{align*}
	P =
	\begin{bmatrix}
		0.30&1.07&0.39&0.00\\
		1.07&5.72&2.54&0.00\\
		0.39&2.54&1.69&0.00\\
		0.00&0.00&0.00&0.20
	\end{bmatrix},
\end{align*}
with corresponding controller gain
$K=-\left[\begin{smallmatrix}
	0.28&2.45&1.77& 0.00 \\
	 0.00& 0.00& 0.00&0.90
	\end{smallmatrix}\right]$.
Application of \textit{Step 2} and \textit{Step 3} in
Section~\ref{sec:terminal_cbf_design} yield $\gamma_f=0.13$ and $\gamma_x = 0.0108$.
The corresponding the terminal
control barrier function is given by $h_f(x)=x^\top P x-\gamma_x$ with
terminal weight $\alpha_f = 10^5$, satisfying the bound in
the proof of Theorem~\ref{thm:BF}.

In Figure~\ref{fig:numerical_examples_domain} (bottom left),
we plot the state constraints together with the resulting
predictive safe set $\SS_{\mathrm{PB}}$ with planning horizon $N=50$ for the states $\Psi$ and
$y_{\mathrm{off}}$ by setting $v=\delta=0$ together with logarithmically scaled
contour lines of $\mH(x)$ around $\SS_{\mathrm{PB}}$. The safe set
$\SS_{\mathrm{PB}}$ touches the lateral constraints only for
car headings that do not point away from the center line as expected.

In Figure~\ref{fig:numerical_examples_domain} (bottom right) we display the resulting domain
of the predictive control barrier function $\mH(x)$ as defined in~\eqref{eq:domain_BF},
from which we can recover infeasible initial conditions and provide sample closed-loop simulations
from the extreme points of $\DD_{\mathrm{PB}}$, which all converge to $\SS_{\mathrm{PB}}$
as desired.

\section{Conclusion}
This paper has addressed the problem of infeasibility of predictive
safety filters resulting, e.g., from infeasible initial system conditions
or large disturbances. Since a simple softening of the state constraints
does not necessarily imply recovery from constraint violations, we proposed
a recovery mechanism with an auxiliary feasibility problem using an iterative
constraint tightening along the planning horizon together with a terminal safe set,
which is required to be a level set of a corresponding discrete-time control barrier
function. Asymptotic stability of the feasible set of the original predictive
safety filter problem under the proposed algorithm is shown using ideas from control
barrier function theory. A principled design procedure for the required components
was provided together with numerical examples to demonstrate recovery from constraint
violations.

\bibliographystyle{IEEEtran}
\bibliography{bibliography.bib}

\begin{thebibliography}{10}
\providecommand{\url}[1]{#1}
\csname url@samestyle\endcsname
\providecommand{\newblock}{\relax}
\providecommand{\bibinfo}[2]{#2}
\providecommand{\BIBentrySTDinterwordspacing}{\spaceskip=0pt\relax}
\providecommand{\BIBentryALTinterwordstretchfactor}{4}
\providecommand{\BIBentryALTinterwordspacing}{\spaceskip=\fontdimen2\font plus
\BIBentryALTinterwordstretchfactor\fontdimen3\font minus
  \fontdimen4\font\relax}
\providecommand{\BIBforeignlanguage}[2]{{%
\expandafter\ifx\csname l@#1\endcsname\relax
\typeout{** WARNING: IEEEtran.bst: No hyphenation pattern has been}%
\typeout{** loaded for the language `#1'. Using the pattern for}%
\typeout{** the default language instead.}%
\else
\language=\csname l@#1\endcsname
\fi
#2}}
\providecommand{\BIBdecl}{\relax}
\BIBdecl

\bibitem{Seto1998}
D.~Seto, B.~Krogh, L.~Sha, and A.~Chutinan, ``{The simplex architecture for
  safe on-line control system upgrades},'' in \emph{Proceedings of the American
  Control Conference}, vol.~6.\hskip 1em plus 0.5em minus 0.4em\relax IEEE,
  1998, pp. 3504--3508.

\bibitem{Ames2019}
A.~D. Ames, S.~Coogan, M.~Egerstedt, G.~Notomista, K.~Sreenath, and P.~Tabuada,
  ``{Control barrier functions: Theory and applications},'' \emph{2019 18th
  European Control Conference, ECC 2019}, pp. 3420--3431, 2019.

\bibitem{Chen2018}
M.~Chen and C.~J. Tomlin, ``{Hamilton–Jacobi Reachability: Some Recent
  Theoretical Advances and Applications in Unmanned Airspace Management},''
  \emph{Annual Review of Control, Robotics, and Autonomous Systems}, vol.~1,
  no.~1, pp. 333--358, 2018.

\bibitem{Gurriet2018}
T.~Gurriet, M.~Mote, A.~D. Ames, and E.~Feron, ``{An Online Approach to Active
  Set Invariance},'' in \emph{Proceedings of the IEEE Conference on Decision
  and Control}, vol. 2018-Decem, 2019, pp. 3592--3599.

\bibitem{Mannucci2017}
T.~Mannucci, E.~J. {Van Kampen}, C.~{De Visser}, and Q.~Chu, ``{Safe
  Exploration Algorithms for Reinforcement Learning Controllers},'' \emph{IEEE
  Transactions on Neural Networks and Learning Systems}, vol.~29, no.~4, pp.
  1069--1081, 2018.

\bibitem{Wabersich2019}
K.~P. Wabersich, L.~Hewing, A.~Carron, and M.~N. Zeilinger, ``Probabilistic
  model predictive safety certification for learning-based control,''
  \emph{IEEE Transactions on Automatic Control}, pp. 1--1, 2021.

\bibitem{Wabersich2018a}
\BIBentryALTinterwordspacing
K.~P. Wabersich and M.~N. Zeilinger, ``A predictive safety filter for
  learning-based control of constrained nonlinear dynamical systems,''
  \emph{Automatica}, vol. 129, p. 109597, 2021. [Online]. Available:
  \url{https://www.sciencedirect.com/science/article/pii/S0005109821001175}
\BIBentrySTDinterwordspacing

\bibitem{li2020robust}
S.~Li and O.~Bastani, ``{Robust model predictive shielding for safe
  reinforcement learning with stochastic dynamics},'' in \emph{2020 IEEE
  International Conference on Robotics and Automation (ICRA)}.\hskip 1em plus
  0.5em minus 0.4em\relax IEEE, 2020, pp. 7166--7172.

\bibitem{Wabersich2018}
K.~P. Wabersich and M.~N. Zeilinger, ``Linear model predictive safety
  certification for learning-based control,'' in \emph{2018 IEEE Conference on
  Decision and Control (CDC)}.\hskip 1em plus 0.5em minus 0.4em\relax IEEE,
  2018, pp. 7130--7135.

\bibitem{Gurriet2020}
T.~Gurriet, M.~Mote, A.~Singletary, P.~Nilsson, E.~Feron, and A.~D. Ames, ``{A
  Scalable Safety Critical Control Framework for Nonlinear Systems},''
  \emph{IEEE Access}, pp. 1--1, 2020.

\bibitem{rawlings2009model}
J.~B. Rawlings, D.~Q. Mayne, and M.~M. Diehl, \emph{{Model Predictive Control:
  Theory, Computation, and Design}}, 2nd~ed.\hskip 1em plus 0.5em minus
  0.4em\relax Nob Hill Publishing, 2017.

\bibitem{Kerrigan2000}
E.~C. Kerrigan and J.~M. Maciejowski, ``{Soft Constraints And Exact Penalty
  Functions In Model Predictive Control},'' in \emph{Control 2000
  Conference}.\hskip 1em plus 0.5em minus 0.4em\relax Citeseer, 2000.

\bibitem{Marvi2020}
Z.~Marvi and B.~Kiumarsi, ``{Safe reinforcement learning: A control barrier
  function optimization approach},'' \emph{International Journal of Robust and
  Nonlinear Control}, no. May, pp. 1--18, 2020.

\bibitem{Zeilinger2014}
M.~N. Zeilinger, M.~Morari, and C.~N. Jones, ``{Soft Constrained Model
  Predictive Control With Robust Stability Guarantees},'' \emph{IEEE
  Transactions on Automatic Control}, vol.~59, no.~5, pp. 1190--1202, 2014.

\bibitem{limon2009input}
D.~Limon, T.~Alamo, D.~Raimondo, D.~M. De~La~Pe{\~n}a, J.~Bravo, A.~Ferramosca,
  and E.~Camacho, ``Input-to-state stability: a unifying framework for robust
  model predictive control,'' in \emph{Nonlinear model predictive
  control}.\hskip 1em plus 0.5em minus 0.4em\relax Springer, 2009, pp. 1--26.

\bibitem{Wills2004}
A.~G. Wills and W.~P. Heath, ``{Barrier function based model predictive
  control},'' \emph{Automatica}, vol.~40, no.~8, pp. 1415--1422, 2004.

\bibitem{Feller2017}
\BIBentryALTinterwordspacing
C.~Feller and C.~Ebenbauer, ``{A stabilizing iteration scheme for model
  predictive control based on relaxed barrier functions},'' \emph{Automatica},
  vol.~80, pp. 328--339, 2017. [Online]. Available:
  \url{http://dx.doi.org/10.1016/j.automatica.2017.02.001}
\BIBentrySTDinterwordspacing

\bibitem{domahidi2012efficientInteriorPoint}
A.~Domahidi, A.~U. Zgraggen, M.~N. Zeilinger, M.~Morari, and C.~N. Jones,
  ``Efficient interior point methods for multistage problems arising in
  receding horizon control,'' in \emph{51st IEEE Conference on Decision and
  Control (CDC)}, Dec 2012, pp. 668--674.

\bibitem{Verschueren2019}
\BIBentryALTinterwordspacing
R.~Verschueren, G.~Frison, D.~Kouzoupis, N.~van Duijkeren, A.~Zanelli,
  B.~Novoselnik, J.~Frey, T.~Albin, R.~Quirynen, and M.~Diehl, ``{Acados: a
  Modular Open-Source Framework for Fast Embedded Optimal Control},'' 2019.
  [Online]. Available: \url{http://arxiv.org/abs/1910.13753}
\BIBentrySTDinterwordspacing

\bibitem{8887800}
Z.~Marvi and B.~Kiumarsi, ``{Safety Planning Using Control Barrier Function: A
  Model Predictive Control Scheme},'' in \emph{2019 IEEE 2nd Connected and
  Automated Vehicles Symposium (CAVS)}, 2019, pp. 1--5.

\bibitem{8431468}
Z.~Wu, F.~Albalawi, Z.~Zhang, J.~Zhang, H.~Durand, and P.~D. Christofides,
  ``{Control Lyapunov-Barrier Function-Based Model Predictive Control of
  Nonlinear Systems},'' in \emph{2018 Annual American Control Conference
  (ACC)}, 2018, pp. 5920--5926.

\bibitem{Zeng2020}
J.~Zeng, B.~Zhang, and K.~Sreenath, ``{Safety-critical model predictive control
  with discrete-time control barrier function},'' \emph{arXiv}, 2020.

\bibitem{Grandia2020}
R.~Grandia, A.~J. Taylor, A.~Singletary, M.~Hutter, and A.~D. Ames, ``Nonlinear
  model predictive control of robotic systems with control lyapunov
  functions,'' in \emph{Proceedings of Robotics: Science and Systems
  XVI}.\hskip 1em plus 0.5em minus 0.4em\relax Robotics: Science and Systems
  Foundation, 2020.

\bibitem{Hewing2020}
L.~Hewing, K.~P. Wabersich, M.~Menner, and M.~N. Zeilinger, ``{Learning-Based
  Model Predictive Control: Toward Safe Learning in Control},'' \emph{Annual
  Review of Control, Robotics, and Autonomous Systems}, vol.~3, no.~1, pp.
  annurev--control--090\,419--075\,625, May 2020.

\bibitem{chen1998quasiInfiniteHorizonMPC}
H.~Chen and F.~Allg{\"o}wer, ``A quasi-infinite horizon nonlinear model
  predictive control scheme with guaranteed stability,'' \emph{Automatica},
  vol.~34, no.~10, pp. 1205 -- 1217, 1998.

\bibitem{DiPillo1989}
G.~{Di Pillo} and L.~Grippo, ``{Exact penalty functions in constrained
  optimization},'' \emph{SIAM Journal on Control and Optimization}, vol.~27,
  no.~6, pp. 1333--1360, 1989.

\bibitem{Ohnishi2019}
M.~Ohnishi, L.~Wang, G.~Notomista, and M.~Egerstedt, ``{Barrier-Certified
  Adaptive Reinforcement Learning With Applications to Brushbot Navigation},''
  \emph{IEEE Transactions on Robotics}, vol.~35, no.~5, pp. 1186--1205, 2019.

\bibitem{rudin1964principles}
W.~Rudin \emph{et~al.}, \emph{Principles of mathematical analysis}.\hskip 1em
  plus 0.5em minus 0.4em\relax McGraw-hill New York, 1964, vol.~3.

\bibitem{Fisac2019}
J.~F. Fisac, A.~K. Akametalu, M.~N. Zeilinger, S.~Kaynama, J.~Gillula, and
  C.~J. Tomlin, ``{A General Safety Framework for Learning-Based Control in
  Uncertain Robotic Systems},'' \emph{IEEE Transactions on Automatic Control},
  vol.~64, no.~7, pp. 2737--2752, 2019.

\bibitem{Boyd1994}
S.~Boyd, L.~{El Ghaoui}, E.~Feron, and V.~Balakrishnan, \emph{{Linear Matrix
  Inequalities in System and Control Theory}}.\hskip 1em plus 0.5em minus
  0.4em\relax SIAM, 1994.

\bibitem{Boender1995}
\BIBentryALTinterwordspacing
C.~G.~E. Boender and H.~E. Romeijn, \emph{Stochastic Methods}.\hskip 1em plus
  0.5em minus 0.4em\relax Boston, MA: Springer US, 1995, pp. 829--869.
  [Online]. Available: \url{https://doi.org/10.1007/978-1-4615-2025-2_15}
\BIBentrySTDinterwordspacing

\bibitem{Lofberg2004}
J.~L{\"{o}}fberg, ``Yalmip : A toolbox for modeling and optimization in
  matlab,'' in \emph{In Proceedings of the CACSD Conference}, Taipei, Taiwan,
  2004.

\bibitem{mpt3}
M.~Herceg, M.~Kvasnica, C.~Jones, and M.~Morari, ``{Multi-Parametric Toolbox
  3.0},'' in \emph{Proc.~of the European Control Conference}, Z\"urich,
  Switzerland, July 17--19 2013, pp. 502--510,
  \url{http://control.ee.ethz.ch/~mpt}.

\bibitem{Andersson2018}
J.~A.~E. Andersson, J.~Gillis, G.~Horn, J.~B. Rawlings, and M.~Diehl,
  ``{CasADi} -- {A} software framework for nonlinear optimization and optimal
  control,'' \emph{Mathematical Programming Computation}, 2018.

\bibitem{biegler2009large}
L.~T. Biegler and V.~M. Zavala, ``Large-scale nonlinear programming using
  ipopt: An integrating framework for enterprise-wide dynamic optimization,''
  \emph{Computers \& Chemical Engineering}, vol.~33, no.~3, pp. 575--582, 2009.

\bibitem{Khalil1996}
H.~K. Khalil and G.~J. W., \emph{{Nonlinear Systems}}, 2001, vol.~3.

\end{thebibliography}

\appendix

\subsection{Lyapunov stability with respect to sets}\label{app:lyap_proof}
	Consider the discrete-time autonomous system of the
	form~\eqref{eq:autonomous_system} with dynamics $\mDefFunction{g}{\RR^n}{\RR^n}$
	and initial condition $x(0)=x_0$ with $x_0\in\RR^n$.  Our goal in this section is
	to show Lyapunov stability results in terms of a safe set $\SS\subset\RR$, e.g.
	given by~\eqref{eq:feasible_set_psf}, which is positively invariant.
	Following standard stability arguments, the result will be established via a
	Lyapunov function.
	\begin{definition}\label{def:pos_def_func}
		Let $\SS,\DD \subset \RR^n$ be non-empty and compact sets with
		$\SS\subset\DD$ and consider a continuous function $\mDefFunction{V}{\DD}{\RR}$.
		We call $V(x)$ a locally positive definite (l.p.d.) function
		around $\SS$ in $\DD$ if it holds that
		\begin{subequations}
			\begin{align}
				\forall x \in \SS:~V(x) &= 0 \text{ and} \\
				\forall x \in \DD\setminus\SS:~V(x)&>0.
			\end{align}
		\end{subequations}
		\END
	\end{definition}
	\begin{definition}\label{def:lyap_func}
		Let $\SS,\DD \subset \RR^n$ be non-empty and compact sets with
		$\SS\subset\DD$ and consider a l.p.d. function $\mDefFunction{V}{\DD}{\RR}$
		around $\SS$ in $\DD$. If there exists a continuous l.p.d. function
		$\Delta V(x)$ around $\SS$ in $\DD$ such that for all $x\in\DD$
		the difference inequality
		\begin{align}\label{eq:def_lyap_func_decrease}
			 V(g(x)) - V(x)\leq -\Delta V(x)
		\end{align}
		holds with respect to system~\eqref{eq:autonomous_system},
		then $V$ is called a Lyapunov function for
		system~\eqref{eq:autonomous_system} with respect to $\SS$ in $\DD$.
		\END
	\end{definition}
	Using a similar analysis structure as in the case of Lypunov stability
	analysis with respect to equilibrium points, we can extend existing
	results to also hold with respect to invariant sets without relying on the
	existence of lower and upper bounding class $\mathcal K$ functions on the
	Lyapunov function. While these are commonly used in MPC literature, see, e.g.,
	in~\cite[Appendix B.2]{rawlings2009model}, establishing existence of the required class $\mathcal K$
	functions for~\eqref{eq:opt_BF} would be difficult due to the lack of a positive definite
	stage cost function with respect to the implicit target set of states $\SS_{\mathrm{PB}}$
	as considered in Theorem~\ref{thm:BF}.
	\begin{theorem}\label{thm:lyapunov_stability}
		Consider system~\eqref{eq:autonomous_system}. Let
		$\SS$ and $\DD$ be non-empty and compact sets with
		$\SS \subset \DD$ that are positively invariant
		for system~\eqref{eq:autonomous_system}.
		If there exists a Lyapunov function with respect to
		$\SS$ in $\DD$ for system~\eqref{eq:autonomous_system},
		then $\SS$ is an asymptotically stable set for
		system~\eqref{eq:autonomous_system} in $\DD$.
	\end{theorem}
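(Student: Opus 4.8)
The plan is to adapt the classical Lyapunov stability proof for equilibrium points (cf.\ \cite[Appendix~B.2]{rawlings2009model}), replacing the distance to the origin by the set distance $|\cdot|_{\SS}$ and, crucially, replacing the usual class-$\mathcal{K}$ lower/upper bounds on $V$ by compactness arguments exploiting that $\SS$ and $\DD$ are compact. I would establish the two conditions of Definition~\ref{def:asy_stable} in turn --- first the $\epsilon$-$\delta$ stability \eqref{eq:def_asy_stable_1}, then attractivity \eqref{eq:def_asy_stable_2} (read as $\lim_{k\to\infty}|x(k)|_{\SS}=0$) --- and in fact I would prove attractivity for every $x(0)\in\DD$, which also makes the choice of $\bar\delta$ trivial. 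Throughout, the key preliminary observation is that positive invariance of $\DD$ together with $\Delta V(x)\geq 0$ makes $x(k)\in\DD$ and $k\mapsto V(x(k))$ non-increasing along any trajectory with $x(0)\in\DD$.

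For stability, I would fix $\epsilon>0$, set $\DD_\epsilon\mDef\{x\in\DD:|x|_{\SS}\geq\epsilon\}$ (compact as a closed subset of $\DD$; if empty the claim is trivial), and let $m_\epsilon\mDef\min_{x\in\DD_\epsilon}V(x)$, which exists by continuity of $V$ and is strictly positive because $V$ is locally positive definite around $\SS$ in $\DD$. Using continuity of $V$, $V|_{\SS}=0$, and compactness of $\SS$, the sublevel set $\{x\in\DD:V(x)<m_\epsilon\}$ is a relative neighborhood of $\SS$ in $\DD$ and therefore contains $\{x\in\DD:|x|_{\SS}<\delta\}$ for some $\delta>0$. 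Then $|x(0)|_{\SS}<\delta$ forces $V(x(0))<m_\epsilon$, hence $V(x(k))<m_\epsilon$ for all $k$ by monotonicity, hence $x(k)\notin\DD_\epsilon$, i.e.\ $|x(k)|_{\SS}<\epsilon$, which is \eqref{eq:def_asy_stable_1}.

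For attractivity, take any $x(0)\in\DD$. The sequence $V(x(k))$ is non-increasing and bounded below by $0$, hence convergent, and telescoping \eqref{eq:def_lyap_func_decrease} gives $\sum_{k=0}^{\infty}\Delta V(x(k))\leq V(x(0))<\infty$, so $\Delta V(x(k))\to 0$. If $|x(k)|_{\SS}$ did not converge to $0$, I would extract a subsequence with $|x(k_j)|_{\SS}\geq\bar\epsilon>0$ and, using compactness of $\DD$, a further subsequence converging to some $\bar x\in\DD\setminus\SS$; continuity and local positive definiteness of $\Delta V$ then yield $\Delta V(x(k_{j_\ell}))\to\Delta V(\bar x)>0$, contradicting $\Delta V(x(k))\to 0$. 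Hence $|x(k)|_{\SS}\to 0$, and any $\bar\delta>0$ works in \eqref{eq:def_asy_stable_2}.

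The step I expect to require the most care is obtaining the \emph{uniform} neighborhood radius $\delta$ in the stability argument without a class-$\mathcal{K}$ lower bound on $V$: this relies on covering the compact set $\SS$ by finitely many balls on which $V<m_\epsilon$ (possible by continuity and $V|_{\SS}=0$) and extracting a Lebesgue-number-type radius. The attractivity argument analogously replaces the class-$\mathcal{K}$ lower bound by a subsequence/compactness argument on $\DD$; the remaining manipulations (monotone convergence of $V(x(k))$, telescoping the decrease inequality) are routine. A minor point I would state explicitly is that $\lim_{k\to\infty}x(k)\in\SS$ in Definition~\ref{def:asy_stable} is to be understood as convergence to the set, since $\SS$ need not be a singleton and the trajectory need not converge to a point.
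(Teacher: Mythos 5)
Your proposal is correct and follows the same overall strategy as the paper: replace the usual class-$\mathcal K$ sandwich bounds on $V$ by extreme-value arguments on compact sets. The stability half is essentially identical --- the paper also defines $V^-\mDef\min_{|x|_{\SS}\geq\epsilon,\,x\in\DD}V(x)$ (your $m_\epsilon$), handles the empty-annulus case via invariance of $\DD$, and extracts a uniform $\delta$ from continuity of $V$ near $\SS$. The attractivity half differs in mechanism but not in substance: the paper assumes $V(x(k))\to\alpha>0$, shows the trajectory must avoid a relative neighborhood $\BB$ of $\SS$, takes the uniform minimum $\gamma=\min_{x\in\DD,\,|x|_{\SS}\geq\beta}\Delta V(x)>0$, and telescopes to drive $V$ negative; you telescope first to get $\sum_k\Delta V(x(k))<\infty$, hence $\Delta V(x(k))\to 0$, and then contradict positivity of $\Delta V$ on $\DD\setminus\SS$ via a convergent subsequence. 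Both finishes rest on exactly the same compactness of $\{x\in\DD:|x|_{\SS}\geq\bar\epsilon\}$ and continuity plus local positive definiteness of $\Delta V$; yours is marginally more direct in that it concludes $|x(k)|_{\SS}\to 0$ explicitly, whereas the paper stops at the contradiction on $V$ and leaves the passage from $V(x(k))\to 0$ to set convergence implicit. Your closing remark that $\lim_{k\to\infty}x(k)\in\SS$ must be read as convergence to the set is a fair and worthwhile clarification of Definition~\ref{def:asy_stable}.
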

	\begin{proof}
	The following proof showing \eqref{eq:def_asy_stable} is based on~\cite[Theorem 4.1]{Khalil1996}
	with adjustments to account for a discrete-time setting as well as stability w.r.t. a
	set rather than an equilibrium point.
	Property~\eqref{eq:def_asy_stable_1}: We first consider the case $\epsilon>0$ such that for all
	$x \in \RR^n$ with $|x|_\SS \geq \epsilon$ it follows
	$x\notin \DD$. Due to forward invariance of $\DD$, this case
	fulfills~\eqref{eq:def_asy_stable_1} by definition for any
	$x(0)\in\DD$.
	\par
	In the remaining case, i.e. $\epsilon>0$ such that there exists
	an $x \in \RR^n$ with $|x|_\SS \leq \epsilon$ for which it holds $x\in\DD$,
	we construct a $\delta > 0$ in the following such that
	$|x(0)|_\SS < \delta \Rightarrow |x(k)|_\SS < \epsilon$ holds for all $k\in\NN$.
	Define
	\begin{align}\label{eq:proof_lyapunov_stability}
		V^-\mDef \min_{|x|_\SS\geq \epsilon,~x\in\DD}V(x),
	\end{align}
	the existence of which can be verified as follows: From $|x|_\SS$ being continuous
	we have that the pre-image $\{x~|~|x|_\SS\geq \epsilon\}$ is closed, which
	yields an overall compact subset constraint on $x$
	in~\eqref{eq:proof_lyapunov_stability} when intersecting the closed pre-image
	with the compact set $\DD$. Since $V(x)$ is continuous it follows
	from the extreme value theorem that the minimum $V^-$ in~\eqref{eq:proof_lyapunov_stability}
	exists and since $V(x)$ is l.p.d. around $\SS$ in $\DD$ with $\SS\subset \DD$ (Definition~\ref{def:pos_def_func})
	it follows that $V^- >0$.
	\par
	Due to continuity of $V$, we can select a $\delta > 0$ such
	that $\mNormGen{x-\bar x}<\delta \Rightarrow|V(x)-V(\bar x)|<V^-$ implying
	for all $\bar x\in\DD$ such that $|\bar x|_\SS < \delta$ that $V(\bar x) < V^-$.
	\par
	For an initial condition $|x(0)|_\SS<\delta$ and $x(0)\in\DD$
	we therefore have from $V(x(0))<V^-$ together with the fact that
	$V(x(k))$ is non-increasing for $k\in\NN$ (Definition~\ref{def:lyap_func})
	and positive invariance of $\DD$ and $\SS$ that $V(x(k))<V^-$ holds
	for all $k\in\NN$.
	\par
	Assume there exists a time step $\bar k\in\NN$ such that $|x(\bar k)|_\SS \geq \epsilon$.
	By construction of $V^-$ it follows $V(x(\bar k)) \geq V^-$, which is a contradiction
	to the statement $V(x(k))<V^-$ above and therefore proves property~\eqref{eq:def_asy_stable_1}.
	\par
	Property~\eqref{eq:def_asy_stable_2}: Compared to the first part of this proof,
	convergence can be shown along the lines of the proof of \cite[Theorem 4.1]{Khalil1996}.
	Let $x(0)\in\DD$.
	As established above, the sequence $V(x(k))$ is non-increasing
	for $k>0$ and $V$ is continuous on the bounded set $\DD$, implying
	that it will converge, i.e. $\lim_{k\rightarrow\infty}V(x(k))=\alpha$
	with $\alpha\geq 0$.
	\par
	For a proof by contradiction, select an $\alpha>0$ such that
	there exists a $x\in\DD$ implying $V(x)=\alpha$ and define
	$\AA\mDef\{x\in\DD:V(x)\leq \alpha\}$. Due to continuity
	of $V$, we can select a $\beta>0$ such that $|x|_\SS<\beta$
	implies $V(x)<\alpha$. It follows that the set
	$\BB\mDef\{x\in\DD:|x|_\SS<\beta\}$ with $\beta >0$
	satisfies $\BB\subset\AA$.
	\par
	Since $V(x(k))$ is monotonically decreasing to $\alpha$ by
	assumption (Definition~\ref{def:lyap_func}) it holds
	that $V(x(k))\geq \alpha$ for all $k>0$, and therefore we have
	$x(k) \notin \BB$ for all $k>0$.
	By excluding the possibility of $x(k)\in\BB$ for $\alpha>0$, we
	can locally define the smallest decrease
	\begin{align}\label{eq:proof_thm_lyapunov_1}
		-\gamma = \max_{x\in\DD,|x|_\SS\geq \beta} -\Delta V(x)
	\end{align}
	with $\Delta V$ according to~\eqref{eq:def_lyap_func_decrease},
	which is strictly less than zero by assumption (Definition~\ref{def:lyap_func}).
	Furthermore, $\gamma$ is bounded since $\Delta V$ is continuous	and $\DD$
	is compact. Note that~\eqref{eq:proof_thm_lyapunov_1} always has a feasible
	suboptimal solution given by $-\Delta V(x(0))$.
	Forward invariance of $\DD$ allows us to use
	the worst-case decrease from above to obtain
	\begin{align*}
		V(x(k)) & = V(x(k-1)) + V(x(k))-V(x(k-1)) \\
				& = V(x(0)) + \sum_{i=0}^{k-1}V(x(i+1))-V(x(i))\\
				& \leq V(x(0)) - \sum_{i=0}^{k-1}\Delta V(x(i))\\
				& \leq V(x(0)) - (k-1)\gamma.
	\end{align*}
	Since $\gamma >0$ there exists a finite $\bar k$ such that
	$V(x(0)) - (\bar k-1)\gamma <0$ and therefore $V(x(\bar k))<0$,
	yielding a contradiction for any $x(0)\in\DD$.
	We can therefore select $\bar \delta = \max_{x\in\DD} | x |_\SS$
	with $\bar \delta >0$ and $\bar \delta < \infty$ since $\SS\subset\DD$ and
	$\DD$ is compact, proving the desired result.
	\end{proof}

\subsection{Proof of Theorem~\ref{thm:cbf} (Control barrier functions)}
	\label{app:cbf_proof}
	
		We prove this result by showing invariance of $\SS$
		(Property~\ref{item:thm_cbf_1}) first, followed by
		asymptotic stability of $\SS$ (Property~\ref{item:thm_cbf_2}).
	
	\begin{proof}
		Property~\ref{item:thm_cbf_1}:
		From Definition~\ref{def:barrier_function},
		\eqref{eq:def_barrier_function_3} it directly follows
		for any $x(0)\in\SS$ with
		$u(k)=\kappa(x(k))\in K_{\mathrm{CBF}}^2(x)$ that
		$h(x(k))\leq 0$ implies $h(x(k+1))\leq 0$ and therefore
		by induction we have that $x(0)\in\SS$ implies
		$x(k)\in\SS$ for all $k>0$, which proves the desired
		property.
		\par
		Property~\ref{item:thm_cbf_2}: Define~$\mDefFunction{V}{\DD}{\RR}$
		as $V(x)=\max(0, h(x))$, which is continuous due to continuity of
		$h$ and l.p.d. around $\SS$ in $\DD$ according to Definition~\ref{def:pos_def_func}
		by construction of $\SS$. In the following, we show that
		for any control law $\kappa$ with 
		$\kappa(x)\in K_{\mathrm{CBF}}(x)$ according to
		Definition~\ref{def:barrier_function}, it follows that $V$ is a
		Lyapunov function for the closed-loop system~\eqref{eq:system} under
		$u(k)=\kappa(x(k))$ with respect to $\SS$
		in $\DD$. The corresponding decrease can be
		bounded through
		
		\begin{align}\label{eq:lyapunov_decrease}
			\Delta V(x) = \max(0,\min(\Delta h(x), V(x))).
		\end{align}
		Since max/min operations preserve continuity properties,
		it follows that $\Delta V(x)$ is continuous.
		Furthermore, \eqref{eq:lyapunov_decrease} is positive
		definite w.r.t. $\SS$ in $\DD$ since $x\in\SS$ implies
		\begin{align*}
			\max(0,
				\underbrace{
					\min(
					\underbrace{\Delta h(x)}_{\in \RR},
					\underbrace{V(x)}_{=0})
				}_{\leq 0})=0
		\end{align*}
		and $x\in \DD\setminus\SS$ implies
		\begin{align*}
			\max(0,
				\underbrace{
					\min(
					\underbrace{\Delta h(x)}_{>0},
					\underbrace{V(x)}_{>0})
				}_{>0})>0.
		\end{align*}
		Lastly, it remains to verify that the required
		decrease condition, i.e., Property~\ref{item:thm_cbf_2}
		holds. To this end, we distinguish the following three cases,
		which are possible due to invariance of $\SS$ and $\DD$:
		\begin{enumerate}
			\item $x(k)\in\SS \land x(k+1)\in\SS$:
				\begin{align*}
					\underbrace{
						V(x(k+1))}_{=0}
					-
					\underbrace{
						V(x(k))}_{=0} \\
					\leq -
					\max(0,
					\underbrace{
						\min(
						\underbrace{\Delta h(x(k))}_{\in \RR},
						\underbrace{V(x(k))}_{=0})
					}_{\leq 0})=0.
				\end{align*}
			\item $x(k)\in\DD\setminus\SS \land x(k+1)\in \DD\setminus\SS$: Notice that due
				to $h(x(k+1))>0$ we have
				$ h(x(k+1))-h(x(k))\leq -\Delta h(x(k))
					\Rightarrow h(x(k))\geq \Delta h(x(k))$
				in this case, implying
				\begin{align*}
					\underbrace{
						V(x(k+1))}_{=h(x(k+1))}
					-
					\underbrace{
						V(x(k))}_{=h(x(k))} \\
					\leq -
					\max(0,
					\underbrace{
						\min(
						\underbrace{\Delta h(x(k))}_{>0},
						\underbrace{h(x(k))}_{\geq \Delta h(x(k))})
					}_{= \Delta h(x(k))>0})= -\Delta h(x(k)).
				\end{align*}
			\item $x(k)\in\DD\setminus \SS \land x(k+1)\in\SS$: Notice
				that due to $h(x(k+1))\leq 0$ it follows
				$\Delta h(x(k))\geq h(x(k))=V(x(k))$ in this case
				and we have
			\begin{align*}
				\underbrace{
					V(x(k+1))}_{=0}
				-
				V(x(k)) \\
				\leq -
				\max(0,
				\underbrace{
					\min(
					\underbrace{\Delta h(x(k))}_{>0},
					\underbrace{V(x(k))}_{\leq \Delta h(x(k))})
				}_{= V(x(k))>0})\leq -\underbrace{V(x(k))}_{>0}.
			\end{align*}
		\end{enumerate}
		Since all the cases above are guaranteed to be true and satisfy
		Property~\ref{item:thm_cbf_2},
		the desired statement follows directly from Theorem~\ref{thm:lyapunov_stability}.
	\end{proof}
\subsection{Technical lemmas for the proof of Theorem~\ref{thm:BF}}
\label{app:technical_lemmas}

\begin{lemma}\label{lem:existence_bf}
	If the conditions in Theorem~\ref{thm:BF} hold, then the
	minimum~\eqref{eq:opt_BF} exists for all $x\in\RR^n$.
\end{lemma}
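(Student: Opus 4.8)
The plan is to verify the two hypotheses of the Weierstrass extreme value theorem for problem~\eqref{eq:opt_BF}: nonemptiness of its feasible set for every $x\in\RR^n$, and the possibility of confining the search to a compact set on which the objective is continuous, without altering the optimal value.

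First I would establish nonemptiness by exhibiting an explicit feasible point. Fix $x\in\RR^n$, choose any input sequence $\bar u_{i|k}\in\UU$, $i=0,\dots,N-1$ (possible since $\UU$ is a nonempty compact set), and propagate the states via $\bar x_{0|k}=x$, $\bar x_{i+1|k}=f(\bar x_{i|k},\bar u_{i|k})$. Setting $\bar\xi_{i|k}=\max(0,c(\bar x_{i|k})+\Delta_i\mOnes{})$ for $i=0,\dots,N-1$ and $\bar\xi_{N|k}=\max(0,h_f(\bar x_{N|k}))$ then yields a tuple satisfying~\eqref{eq:opt_BF_init}--\eqref{eq:opt_BF_terminal}; here the hypothesis of Theorem~\ref{thm:BF} that $h_f$ is continuous on all of $\RR^n$ (rather than only on $\DD_f$) is precisely what guarantees that $\bar\xi_{N|k}$ is well defined irrespective of whether $\bar x_{N|k}\in\DD_f$. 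Hence the feasible set is nonempty and $\mH(x)$ is a finite, nonnegative infimum; let $M(x)<\infty$ denote the objective value attained by this candidate.

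Second I would reduce to a compact problem. Eliminating the predicted states by substitution — writing the $i$-step map $\phi_i(x,u_{0|k},\dots,u_{i-1|k})$, which is continuous as a composition of the continuous dynamics $f$ in~\eqref{eq:system} — recasts~\eqref{eq:opt_BF} as minimizing the continuous objective~\eqref{eq:opt_BF_cost} over the finite-dimensional variable $(u_{0|k},\dots,u_{N-1|k},\xi_{0|k},\dots,\xi_{N|k})$ subject to $u_{i|k}\in\UU$, $\xi_{i|k}\ge 0$, $c(\phi_i(\cdot))+\Delta_i\mOnes{}\le\xi_{i|k}$ and $h_f(\phi_N(\cdot))\le\xi_{N|k}$. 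Since $\UU$ is closed and the remaining constraints are preimages of closed sets under continuous maps (using continuity of $c$ and $h_f$), the feasible set is closed. Appending the redundant constraint that the objective be at most $M(x)$ leaves the optimal value unchanged, since the candidate above certifies $\mH(x)\le M(x)$; on the restricted set every term of~\eqref{eq:opt_BF_cost} is nonnegative and bounded by $M(x)$, so with $\alpha_f>0$ one gets $0\le\xi_{N|k}\le M(x)/\alpha_f$ and $\mNormGenSmall{\xi_{i|k}}\le M(x)$ for $i<N$, confining each slack to a compact set, while each $u_{i|k}$ lies in the compact set $\UU$. Intersecting with the closed feasible set gives a nonempty compact set; the extreme value theorem then yields attainment of the minimum over it, which equals $\mH(x)$ by redundancy, so the minimum in~\eqref{eq:opt_BF} exists for every $x\in\RR^n$.

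I do not anticipate a real obstacle: this is the standard coercivity-plus-closedness argument. The only points needing care are ensuring the terminal slack of the feasibility candidate is well defined — exactly why Theorem~\ref{thm:BF} requires $h_f$ continuous on $\RR^n$ rather than only on $\DD_f$ — and noting that nonnegativity of the generic norm $\mNormGenSmall{\cdot}$ in~\eqref{eq:opt_BF_cost}, together with $\alpha_f>0$, is what lets the bound on the total cost be transferred to each individual slack and thereby produce the compactness that Weierstrass needs.
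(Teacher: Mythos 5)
Your argument is correct and rests on the same tool as the paper's proof --- the Weierstrass extreme value theorem applied to a continuous objective over a compact set --- but you reach compactness by a different route. The paper condenses the problem first: it substitutes the pointwise-minimal feasible slacks $\xi_{i|k}=\max(0,c(x_{i|k})+\Delta_i\mOnes{})$ and $\xi_{N|k}=\max(0,h_f(x_{N|k}))$ directly into the objective, so the only remaining decision variable is the input sequence, which lives in the compact set $\UU^N$; continuity of the condensed objective (compositions, sums, maxima of the continuous $f$, $c$, $h_f$) then gives existence immediately. You instead keep the slacks as decision variables, exhibit an explicit feasible point to get a finite upper bound $M(x)$, and intersect the closed feasible set with the sublevel set of the objective to confine the slacks to a compact ball. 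The two buy slightly different things: the paper's condensation is shorter but implicitly uses that the optimum is attained at the minimal feasible slacks, which requires the norm in~\eqref{eq:opt_BF_cost} to be monotone on the nonnegative orthant (true for the usual norms, but left unstated); your sublevel-set argument avoids that monotonicity assumption entirely and makes the nonemptiness of the feasible set explicit, at the cost of a longer write-up. Both are valid proofs of the lemma.
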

\begin{proof}
	The constrained optimization problem~\eqref{eq:opt_BF}
	can equally be written in the condensed form
	\begin{align*}
		\inf_{u_i\in\UU}~&\alpha_f \max(0,h_f(f(..f(f(x,u_0),u_1),..))) \nonumber\\
			& + \sum_{i=0}^{N-1} \mNormGen{\max(0,c(f(..f(f(x,u_0),u_1),..))+\Delta_i\mOnes{},0)},
	\end{align*}
	with $x_{1|k}=f(x,u_{0|k})$, $x_{2|k}=f(f(x,u_{0|k}),u_{1|k})$, .., $\xi_{i|k}=\max(0, c(x_{i|k}))$
	for all $i=0,..,N-1$, and $\xi_{N|k}=\max(0, h_f(x_{N|k}))$. Since compositions, sums, and the maximum of
	continuous functions yield continuous functions and $h_f$, $c$, and $f$ are assumed to be continuous
	on $\RR^n$, it follows that the objective is continuous	on $\RR^n$.	In addition, the input space
	$\UU$ is assumed to be compact, allowing us to apply the Weierstrass
	Extreme Value Theorem~\cite[Proposition A.7]{rawlings2009model}, which implies that the minimum
	exists for all $x\in\RR^n$ and therefore the proof is complete.
\end{proof}
\begin{lemma}\label{lem:bounded_solution}
	If the conditions in Theorem~\ref{thm:BF} hold, then
	it follows that $\DD_\rho = \{x\in\RR^n|h_{\mathrm{PB}}(x)\leq \rho\}\subseteq \XX_0(\mOnes{}\rho)$
	with $\rho>0$ and $\XX_0(\mOnes{}\rho)$ according to~\eqref{eq:def_tight_state_constraints}.
\end{lemma}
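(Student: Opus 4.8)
The plan is to lower bound $\mH(x)$ by keeping only the $i=0$ term of the cost \eqref{eq:opt_BF_cost} and to exploit that the constraint tightening is trivial at the first prediction step, namely that $\Delta_0 = 0$.

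First I would record that $\mH(x)$ is well defined and finite for every $x\in\RR^n$ by Lemma~\ref{lem:existence_bf}, and that every summand in \eqref{eq:opt_BF_cost} is nonnegative since $\alpha_f>0$ and all slacks obey $\xi_{i|k}\geq 0$. Discarding all terms except the one for $i=0$ then yields $\mH(x)\geq\mNormGenSmall{\xi^*_{0|k}}$, where $\xi^*_{0|k}$ denotes the optimal value of $\xi_{0|k}$. Next I would analyze how $\xi_{0|k}$ enters \eqref{eq:opt_BF}: because $\mXpred_{0|k}=x$ and $\Delta_0=0$, the soft state constraint \eqref{eq:opt_BF_state_cons} for $i=0$ is exactly $c(x)\leq\xi_{0|k}$ together with $\xi_{0|k}\geq 0$, so necessarily $\xi_{0|k}\geq\max(0,c(x))$ componentwise; since $\xi_{0|k}$ appears in no other constraint and enters the objective monotonically, at the optimum it is pushed down to $\xi^*_{0|k}=\max(0,c(x))$. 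Hence
\begin{align*}
	\mH(x)\;\geq\;\mNormGenSmall{\max(0,c(x))}.
\end{align*}

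Finally I would take an arbitrary $x\in\DD_\rho$, so that $\mNormGenSmall{\max(0,c(x))}\leq\rho$, and use that the norm bounds the absolute value of each of its components to deduce $c_j(x)\leq\max(0,c_j(x))\leq\rho$ for every $j$, that is, $c(x)\leq\rho\mOnes{}=-\Delta_0\mOnes{}+\rho\mOnes{}$. By the definition \eqref{eq:def_tight_state_constraints} of the tightened soft constraint this says precisely $x\in\XX_0(\mOnes{}\rho)$, completing the argument. There is no serious obstacle here; the only points that require a little attention are the observation that $\xi_{0|k}$ decouples from the rest of the problem and can be minimized to $\max(0,c(x))$ on its own, and the monotonicity of $\mNormGenSmall{\cdot}$ used in the last step.
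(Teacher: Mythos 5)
Your proof is correct and follows essentially the same route as the paper's (much terser) argument: the nonnegativity of all cost terms gives $\mH(x)\geq\mNormGenSmall{\xi^*_{0|k}}\geq\mNormGenSmall{\max(0,c(x))}$ via the $i=0$ constraint with $\Delta_0=0$, and hence $c(x)\leq\rho\mOnes{}$ on $\DD_\rho$. The only inessential extra step is your exact characterization $\xi^*_{0|k}=\max(0,c(x))$; the feasibility inequality $c(x)\leq\xi^*_{0|k}$ together with $\mNormGenSmall{\xi^*_{0|k}}\leq\rho$ already suffices.
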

\begin{proof}
	For any $x\in\DD_\rho$ it follows that the objective function~\eqref{eq:opt_BF_cost}
	implies that $\mNormGenSmall{\xi_{0|k}^*}\leq \rho$ and it must therefore hold that $x=x_{0|k}\in\XX_0(\mOnes{}\rho)$, which proves
	the desired statement.
\end{proof}
\begin{lemma}\label{lem:compact_bounding_set}
	Let the conditions in Theorem~\ref{thm:BF} hold and consider a state $x\in\DD_{\rho}$ with
	$\DD_\rho = \{x\in\RR^n|h_{\mathrm{PB}}(x)\leq \rho\}$, $\rho \geq 0$
	and input sequence $\{u_{i|k}\}_{i=0}^{N-1}\in\UU^{N-1}$. Define a corresponding state sequence
	$x_{0|k}\mDef x$ and $x_{i+1|k}\mDef f(x_{i|k},u_{i|k})$ for $i=0,..,N-1$ and slack sequence
	$\xi_{i|k}\mDef \max(0, c(x_{i|k})+\Delta_i \mOnes{})$ for $i=0,..,N-1$, $\xi_{N|k}\mDef \max(0, h_f(x_{N|k}))$.
	For every $\rho\geq 0$, there exists a compact set $\ZZ_\rho$ such that
	for all $x\in\DD_\rho$ and $\{u_{i|k}\}_{i=0}^{N-1}\in\UU^{N-1}$ it holds
	that $(\{x_{i|k}\}_{i=0}^N, \{\xi_{i|k}\}_{i=0}^N\})\in\ZZ_\rho$.
\end{lemma}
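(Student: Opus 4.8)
The plan is to construct $\ZZ_\rho$ explicitly as a finite Cartesian product of compact sets obtained by propagating compactness forward along the prediction horizon, using repeatedly that the continuous image of a compact set is compact. First I would isolate a compact set containing every admissible initial state: by Lemma~\ref{lem:bounded_solution} we have $\DD_\rho\subseteq\XX_0(\mOnes{}\rho)$, and the compactness hypothesis of Theorem~\ref{thm:BF} (read componentwise for the vector argument $\mOnes{}\rho$) guarantees that $X_0\mDef\XX_0(\mOnes{}\rho)$ is compact. This is the crucial reduction, since $\DD_\rho$ itself is not yet known to be closed at this point in the proof of Theorem~\ref{thm:BF}, but it suffices to embed it in the genuinely compact superset $X_0$.

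Next I would push this through the dynamics by defining recursively $X_{i+1}\mDef f(X_i\times\UU)$ for $i=0,\dots,N-1$. Since $\UU$ is compact by assumption and $f$ is continuous, each $X_i\times\UU$ is compact and hence $X_{i+1}$ is compact. A short induction on $i$ then shows that for every $x\in\DD_\rho$ and every $\{u_{i|k}\}_{i=0}^{N-1}\in\UU^{N-1}$ the state sequence of the lemma satisfies $x_{i|k}\in X_i$ for all $i=0,\dots,N$. For the slacks, I would set $\Xi_i\mDef\{\max(0,c(y)+\Delta_i\mOnes{}):y\in X_i\}$ for $i=0,\dots,N-1$ and $\Xi_N\mDef\{\max(0,h_f(y)):y\in X_N\}$; since $c$ and $h_f$ are continuous on $\RR^n$ (the latter by hypothesis of Theorem~\ref{thm:BF}) and $z\mapsto\max(0,z)$ is continuous, each $\Xi_i$ is a continuous image of the compact set $X_i$, hence compact, and $\xi_{i|k}\in\Xi_i$ by the very definition of the slack sequence in the statement. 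Taking $\ZZ_\rho\mDef\bigl(\prod_{i=0}^{N}X_i\bigr)\times\bigl(\prod_{i=0}^{N}\Xi_i\bigr)$, a finite product of compact sets, then gives a compact set with the asserted property.

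I do not expect a genuine obstacle here, as the argument is a routine compactness propagation. The only two points that need a little care are (i) starting the induction from the compact set $\XX_0(\mOnes{}\rho)$ supplied by Lemma~\ref{lem:bounded_solution} rather than directly from $\DD_\rho$, and (ii) interpreting the scalar compactness hypothesis on $\XX_0(\xi)$, $0\le\xi<\infty$, componentwise so that it applies to the vector slack $\mOnes{}\rho$ with $\rho\ge 0$.
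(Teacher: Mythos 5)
Your proposal is correct and follows essentially the same route as the paper's proof: embed $\DD_\rho$ in the compact set $\XX_0(\mOnes{}\rho)$ via Lemma~\ref{lem:bounded_solution}, propagate compactness forward through the continuous dynamics and compact input set, and obtain the slack sets as continuous images of compact sets. Your version merely makes the recursive construction of the sets $X_i$ and $\Xi_i$ explicit, which the paper states more informally.
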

\begin{proof}
	From Lemma~\ref{lem:bounded_solution}, we know that $x\in\DD_\rho$ will be contained
	in the compact set $\XX_0(\mOnes{}\rho)$. Since the dynamics are continuous and the input
	space is compact, it follows that the prediction mapping of the outer bounding initial set
	$\XX_0(\mOnes{}\rho)$ that contains the states $\{\mXpred_{i|k}\}_{i=0}^N$ will be compact. By noting that a feasible slack sequence for any state sequence $\{\mXpred_{i|k}\}_{i=0}^N$ is given by the continuous mapping
	$\xi_{i|k} = \max(0,c(x_{i|k}))$ and $\xi_{N|k} = \max(0,c(x_{N|k}))$,
	it follows that a valid set of slack sequences corresponding to the compact set of possible states sequences will be compact and therefore the proof is complete.
\end{proof}

\begin{IEEEbiography}[{\includegraphics[width=1in,height=1.25in,clip,keepaspectratio]{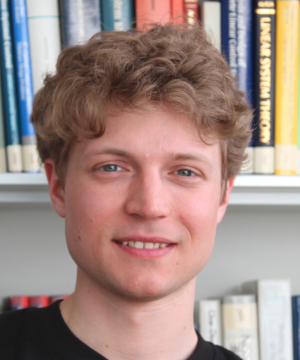}}]{Kim P. Wabersich}
    received a BSc. and MSc. degree in engineering cybernetics from the University of Stuttgart in Germany in 2015 and 2017, respectively. He completed his doctoral studies at ETH Zurich in 2021 and is currently a postdoctoral researcher with the Institute for Dynamic Systems and Control (IDSC) at ETH Zurich. During his studies, he was a research assistant at the Machine Learning and Robotics Lab (University of Stuttgart) and the Daimler Autonomous Driving Research Center (B\"oblingen, Germany and Sunnyvale, CA, USA). His research interests include learning-based model predictive control and safe model-based reinforcement learning.
\end{IEEEbiography}
\begin{IEEEbiography}[{\includegraphics[width=1in,height=1.25in,clip,keepaspectratio]{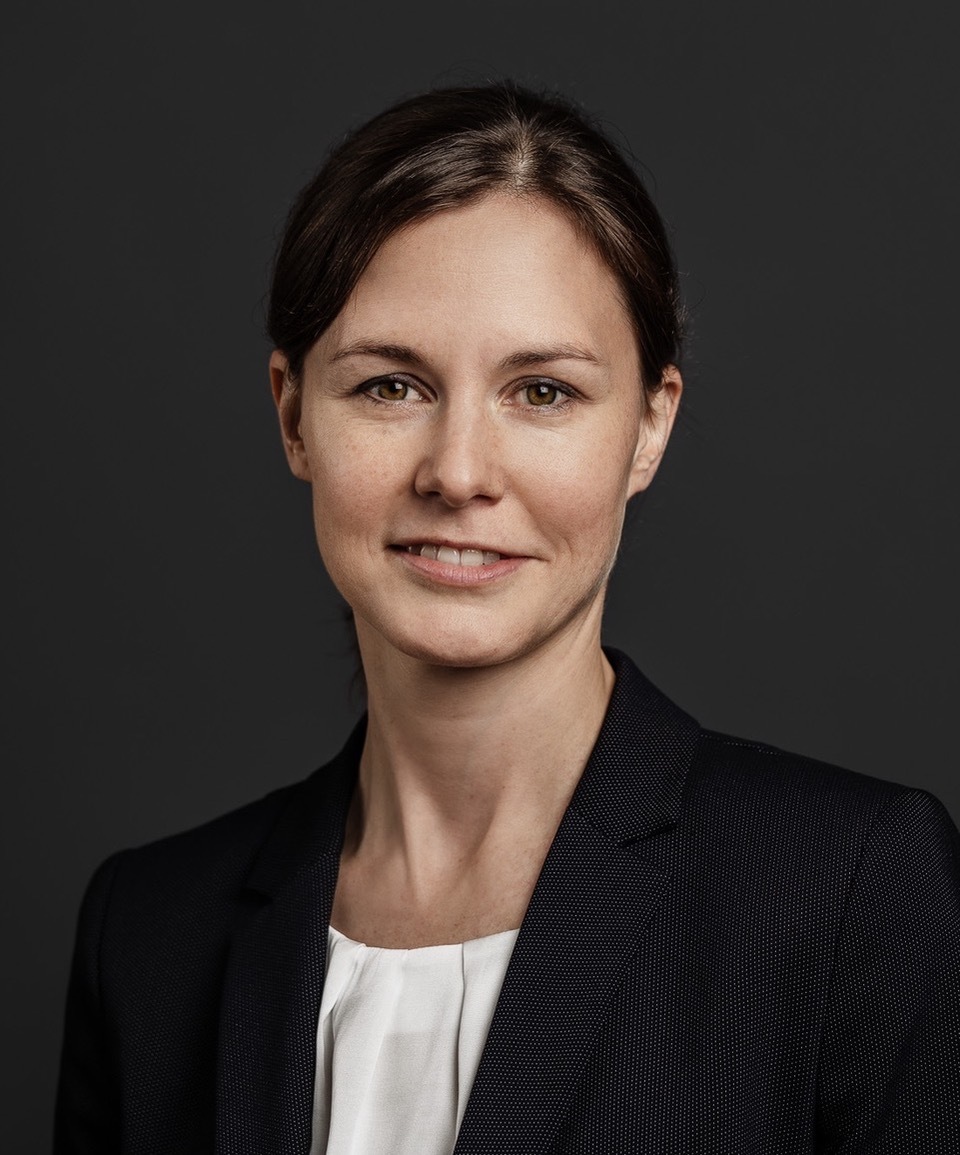}}]{Melanie N. Zeilinger}
    is an Assistant Professor at ETH Zurich, Switzerland. She received the Diploma degree in engineering cybernetics from the University of Stuttgart, Germany, in 2006, and the Ph.D. degree with honors in electrical engineering from ETH Zurich, Switzerland, in 2011. From 2011 to 2012 she was a Postdoctoral Fellow with the Ecole Polytechnique Federale de Lausanne (EPFL), Switzerland. She was a Marie Curie Fellow and Postdoctoral Researcher with the Max Planck Institute for Intelligent Systems, Tübingen, Germany until 2015 and with the Department of Electrical Engineering and Computer Sciences at the University of California at Berkeley, CA, USA, from 2012 to 2014. From 2018 to 2019 she was a professor at the University of Freiburg, Germany. Her current research interests include safe learning-based control, as well as distributed control and optimization, with applications to robotics and human-in-the-loop control.
\end{IEEEbiography}

\end{document}